\def\equationautorefname~#1\null{%
  Equation~(#1)\null
}
\patchcmd{\hyper@makecurrent}{%
    \ifx\Hy@param\Hy@chapterstring
        \let\Hy@param\Hy@chapapp
    \fi
}{%
    \iftoggle{inappendix}{
        \@checkappendixparam{chapter}%
        \@checkappendixparam{section}%
        \@checkappendixparam{subsection}%
        \@checkappendixparam{subsubsection}%
        \@checkappendixparam{paragraph}%
        \@checkappendixparam{subparagraph}%
    }{}%
}{}{\errmessage{failed to patch}}
\newcommand*{\@checkappendixparam}[1]{%
    \def\@checkappendixparamtmp{#1}%
    \ifx\Hy@param\@checkappendixparamtmp
        \let\Hy@param\Hy@appendixstring
    \fi
}
\apptocmd{\appendix}{\toggletrue{inappendix}}{}{\errmessage{failed to patch}}
\apptocmd{\subappendices}{\toggletrue{inappendix}}{}{\errmessage{failed to patch}}
\newcommand{\F}{\ensuremath{\mathbb{F}}}
\newcommand{\R}{\ensuremath{\mathbb{R}}}
\newcommand{\Z}{\ensuremath{\mathbb{Z}}}
\DeclarePairedDelimiter\inner{\langle}{\rangle}
\DeclarePairedDelimiter\abs{\lvert}{\rvert}
\DeclarePairedDelimiter\set{\{}{\}}
\DeclarePairedDelimiter\parens{(}{)}
\DeclarePairedDelimiter\round{\lfloor}{\rceil}
\DeclarePairedDelimiter\floor{\lfloor}{\rfloor}
\DeclarePairedDelimiter\ceil{\lceil}{\rceil}
\DeclarePairedDelimiter\length{\lVert}{\rVert}
\DeclarePairedDelimiterX\braket[2]{\langle}{\rangle}{#1 \delimsize\vert #2}
\newcommand{\matB}{\ensuremath{\mathbf{B}}}
\newcommand{\vecb}{\ensuremath{\mathbf{b}}}
\newcommand{\vecc}{\ensuremath{\mathbf{c}}}
\newcommand{\vect}{\ensuremath{\mathbf{t}}}
\newcommand{\vecv}{\ensuremath{\mathbf{v}}}
\newcommand{\vecw}{\ensuremath{\mathbf{w}}}
\newcommand{\vecy}{\ensuremath{\mathbf{y}}}
\newcommand{\vecz}{\ensuremath{\mathbf{z}}}
\newcommand{\veczero}{\ensuremath{\mathbf{0}}}
\newtheorem{theorem}{Theorem}[section]
\def\NewTheorem#1#2{%
  \newaliascnt{#1}{theorem}
  \newtheorem{#1}[#1]{#2}
  \aliascntresetthe{#1}
  \expandafter\def\csname #1autorefname\endcsname{#2}
}
\theoremstyle{plain}            
\theoremstyle{definition}       
\theoremstyle{remark}           
\numberwithin{equation}{section}
\newcommand{\bit}{\ensuremath{\set{0,1}}}
\DeclareMathOperator{\poly}{poly}
\DeclareMathOperator*{\wt}{wt}
\DeclareMathOperator{\vol}{vol}
\renewcommand{\cal}[1]{\mathcal{#1}}
\renewcommand{\pod}[1]{\mathchoice
  {\allowbreak \if@display \mkern 18mu\else \mkern 8mu\fi (#1)}
  {\allowbreak \if@display \mkern 18mu\else \mkern 8mu\fi (#1)}
  {\mkern4mu(#1)}
  {\mkern4mu(#1)}
}
\let\@@pmod\pmod
\DeclareRobustCommand{\pmod}{\@ifstar\@pmods\@@pmod}
\def\@pmods#1{\mkern4mu({\operator@font mod}\mkern 6mu#1)}
\renewcommand{\epsilon}{\varepsilon}
\DeclareMathOperator{\KV}{KV} 
\title{Lattice (List) Decoding Near Minkowski's Inequality}
\author{Ethan Mook\thanks{University of Michigan,
    \texttt{emook@umich.edu}.}
  \and Chris Peikert\footnote{Computer Science and Engineering,
    University of Michigan.  Email: \texttt{cpeikert@umich.edu}.  This
    material is based upon work supported by the National Science
    Foundation under Award CCF-2006857. The views expressed are those
    of the authors and do not necessarily reflect the official policy
    or position of the National Science Foundation. Part of this work
    was done while the author was visiting the Simons Institute Spring
    2020 program ``Lattices: Algorithms, Complexity, and
    Cryptography.''}}
\begin{document}
\maketitle

\begin{abstract}
  Minkowski proved that any $n$-dimensional lattice of unit determinant
has a nonzero vector of Euclidean norm at most~$\sqrt{n}$; in fact,
there are $2^{\Omega(n)}$ such lattice vectors. Lattices whose minimum
distances come close to Minkowski's bound provide excellent sphere
packings and error-correcting codes in~$\R^{n}$.

The focus of this work is a certain family of efficiently
constructible $n$-dimensional lattices due to Barnes and Sloane, whose
minimum distances are within an $O(\sqrt{\log n})$ factor of
Minkowski's bound. Our primary contribution is a polynomial-time
algorithm that \emph{list decodes} this family to distances
approaching $1/\sqrt{2}$ of the minimum distance. The main
technique is to decode Reed-Solomon codes under error measured in the
Euclidean norm, using the Koetter-Vardy ``soft decision'' variant of
the Guruswami-Sudan list-decoding algorithm.


\end{abstract}

\listoffixmes

\section{Introduction}
\label{sec:introduction}

A linear (error-correcting) \emph{code}~$\cal{C}$ is a linear subspace
of~$\F_{q}^{n}$, and its \emph{minimum distance}~$d = d(\cal{C})$ is
the minimum Hamming weight of its nonzero code words. Such a code is
uniquely decodable under error having weight up to $\frac{d-1}{2}$,
and for many codes it is known how to perform such decoding
efficiently. See, e.g.,~\cite{GRS:_essential_coding_theory} for
extensive background on the combinatorial and algorithmic aspects of
codes.

Elias~\cite{elias-zero-error} and Wozencraft~\cite{wozencraft} put
forward the idea of decoding under error having weight~$d/2$ or more,
which can potentially lead to ambiguity. The goal of \emph{list
  decoding} is to find \emph{all} codewords within a certain distance
of a received word, and it is often possible to guarantee that there
are not too many. Breakthrough list-decoding algorithms were obtained
by Goldreich and Levin~\cite{DBLP:conf/stoc/GoldreichL89} for the
Hadamard code, and by Sudan~\cite{DBLP:journals/jc/Sudan97} and
Guruswami and Sudan~\cite{DBLP:journals/tit/GuruswamiS99} for
Reed-Solomon codes. These results and others have had countless
applications across computer science.

A (full-rank) \emph{lattice}~$\Lambda \subset \R^{n}$ is a discrete
additive subgroup whose linear span is~$\R^{n}$, and its minimum
distance $\lambda_{1} = \lambda_{1}(\Lambda)$ is the minimum Euclidean
norm of its nonzero lattice points. Naturally, a lattice is uniquely
decodable under error having norm less than $\lambda_{1}/2$, and
efficient algorithms are known for decoding some lattices up to that
distance. For example, for the integer lattice~$\Z^{n}$, which has
unit minimum distance, we can simply round each coordinate to the
nearest integer.

A main measure of a lattice's ``quality,'' e.g., as a sphere packing
or error-correcting code, is its normalized minimum distance
$\sqrt{\gamma(\Lambda)} =
\lambda_{1}(\Lambda)/\det\parens{\Lambda}^{1/n}$ (the square root is
present for historical reasons), where the determinant
$\det(\Lambda) = \vol(\R^{n}/\Lambda)$ is the covolume of the lattice,
i.e., the absolute value of the determinant of any $\Z$-basis
of~$\Lambda$.  A theorem of Minkowski bounds this quality by
$\sqrt{\gamma(\Lambda)} \leq \sqrt{n}$ for any lattice
$\Lambda \subseteq \R^{n}$; in fact, a refined version says that there
are an exponential $2^{\Omega(n)}$ number of lattice vectors of norm
at most $\sqrt{n} \cdot \det\parens{\Lambda}^{1/n}$. There exist
infinite families of lattices whose minimum distances are
asymptotically tight with Minkowski's bound (up to a small constant
factor), and there are efficiently constructible families that are
nearly tight with it (see~\cite{conwaysloane99:_splag} for extensive
background, and below for further details).

\paragraph{Lattice (list) decoding.}

Motivated by the many similarities between codes and lattices,
Grigorescu and Peikert~\cite{DBLP:conf/coco/GrigorescuP12} initiated
the study of (efficient) list decoding for lattices, and, building on
the unique-decoding algorithm of Micciancio and
Nicolosi~\cite{DBLP:conf/isit/MicciancioN08}, gave an algorithm for
the well known Barnes-Wall family of
lattices~$\text{BW}_{k} \subset \R^{n}$, where $n=2^{k}$. Their
algorithm's running time, and hence list size, is polynomial in~$n$
for decoding distances approaching the minimum
distance~$\lambda_{1}(\text{BW}_{k}) = \Theta(\sqrt{n})$ (but at this
threshold and beyond, the list size can be super-polynomial). However,
Barnes-Wall lattices have only moderately good quality: compared to
Minkowski's inequality, they satisfy the much tighter bound
$\sqrt{\gamma(\text{BW}_{k})} = O(n^{1/4})$. So, the results
of~\cite{DBLP:conf/coco/GrigorescuP12} are quite far from optimal in
terms of (determinant-normalized) decoding distance. By contrast, a
recent work of Ducas and Pierrot~\cite{DBLP:journals/dcc/DucasP19}
gave, for a certain family of lattices, a simple and efficient
decoding algorithm for normalized distance $\Theta(\sqrt{n}/\log n)$,
which is tight with Minkowski's bound up to an $O(\log n)$
factor. However, their algorithm only performs \emph{unique} (not
list) decoding, below half the minimum distance.

In this work we focus on an infinite family of lattices
$\Lambda_{n} \subset \R^{n}$, originally constructed by Barnes and
Sloane~\cite{barnes83:_lattice_packings}, having normalized minimum
distance $\sqrt{\gamma(\Lambda_{n})} = \Omega(\sqrt{n/\log n})$. Our
main contribution is a polynomial-time list-decoding algorithm for
this family, which decodes to distance
nearly~$\lambda_{1}(\Lambda_{n})/\sqrt{2}$. An immediate corollary is
the first (to our knowledge) polynomial-time \emph{unique} decoder for
this family, for up to half the minimum distance. In summary, we
obtain efficient (list-)decoding algorithms for distances within an
$O(\sqrt{\log n})$ factor of the universal barrier implied by
Minkowski's theorem (since for any lattice there can be exponentially
many lattice points within normalized distance~$\sqrt{n}$ of a target
point).

\begin{theorem}
  \label{thm:main}
  There is an efficiently constructible family of
  lattices~$\Lambda = \Lambda_{n} \subset \R^{n}$ having normalized
  minimum distance
  $\lambda_{1}(\Lambda)/\det\parens{\Lambda}^{1/n} =
  \Omega(\sqrt{n/\log n})$, which for any constant $\varepsilon > 0$
  are list decodable to within distance
  $(1/\sqrt{2}-\varepsilon) \cdot \lambda_{1}(\Lambda)$ in some
  $\poly(n)$ time.
\end{theorem}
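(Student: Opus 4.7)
The plan is to first recall the Barnes-Sloane construction~\cite{barnes83:_lattice_packings} of~$\Lambda_n$, which lifts a Reed-Solomon code (or a nested family of them, in a Construction-D-style fashion) over an alphabet $\F_q$ with $q = \poly(\log n)$ to a lattice in~$\R^n$. The fact that this family achieves normalized minimum distance $\Omega(\sqrt{n/\log n})$ is already proved in the original paper; for this theorem I would simply cite it. The structural property I would actually exploit is that each lattice vector decomposes as an embedded Reed-Solomon codeword $\vec{c}$ plus an offset from an easily-decoded sublattice (essentially $q\Z^n$), so that decoding~$\Lambda_n$ near a target $\vec{t} \in \R^n$ reduces, via coordinate-wise rounding, to the task of list decoding the underlying Reed-Solomon code in the Euclidean metric up to a suitable radius $\rho$.

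The core algorithmic step is then to list-decode $\RS_{n,k}$ in the Euclidean metric up to $\rho$ approaching $\lambda_1/\sqrt{2}$. For this, I would invoke the Koetter-Vardy soft-decision decoder on a reliability matrix $\Pi \in \R_{\ge 0}^{n \times q}$ whose entries $\pi_{i,a}$ are a decreasing function of the distance between $t_i$ and the real embedding of $a \in \F_q$. Given such a $\Pi$, KV enumerates in $\poly(n)$ time all codewords~$\vec{c}$ whose score $\sum_i \pi_{i,c_i}$ exceeds a Johnson-like threshold determined by $\|\Pi\|_F$ and the code rate. With the right choice of weights, a Cauchy-Schwarz-type calculation lower-bounds this score in terms of $\|\vec{t} - \vec{c}\|^2$, so that every codeword within Euclidean distance $(1/\sqrt{2}-\varepsilon)\lambda_1(\Lambda)$ of~$\vec{t}$ is enumerated. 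The $1/\sqrt{2}$ constant matches the classical Johnson/Rankin bound for lattices, which says that any Euclidean ball of radius $\lambda_1/\sqrt{2}$ contains only $\poly(n)$ lattice points, and is therefore the combinatorial barrier to efficient list decoding.

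The main obstacle I anticipate is picking the reliability weights $\pi_{i,a}$ so that the Cauchy-Schwarz bound is tight enough to yield the sharp constant $1/\sqrt{2}$, and simultaneously tuning the Koetter-Vardy multiplicity parameter so that the decoding radius approaches $\lambda_1/\sqrt{2}$ while both runtime and list size remain polynomial in~$n$ for any constant~$\varepsilon$. A secondary, more routine issue is verifying that the coordinate-wise rounding step which reduces lattice decoding to RS decoding introduces at most $o(\lambda_1)$ loss in the decoding radius and only a polynomial blow-up in list size. Combining these pieces --- the structural decomposition of the Barnes-Sloane lattice, the rounding reduction, and an RS list decoder achieving the Johnson radius in the Euclidean metric via Koetter-Vardy --- yields the efficient decoder claimed by \Cref{thm:main}.
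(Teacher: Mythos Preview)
Your proposal has the right high-level ingredient (Koetter--Vardy soft decoding of Reed--Solomon in the Euclidean metric), but it mischaracterizes the Barnes--Sloane construction in a way that makes the rest of the plan unworkable.

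First, the structural picture. The lattice is \emph{not} built from a single RS code over an alphabet of size $q=\poly(\log n)$ together with a sublattice ``essentially $q\Z^n$.'' It is a Construction~D lattice for a tower $\F_2^n=\cal{C}_0\supseteq\cdots\supseteq\cal{C}_\ell$ of \emph{binary} BCH codes (subfield subcodes of RS over $\F_q$ with $q\approx n$), with $\ell=\Theta(\log n)$ levels. A lattice vector decomposes recursively as $\vecv_i=\tilde{\vecc}_i+2\vecv_{i-1}$ with $\vecc_i\in\cal{C}_i$ and $\vecv_{i-1}\in\Lambda_{i-1}$, so there is no single ``RS codeword plus easily-decoded offset'' decomposition; peeling off one codeword leaves you with another Construction~D lattice of depth $\ell-1$. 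Consequently your ``coordinate-wise rounding step'' does not exist, and the question of whether it costs $o(\lambda_1)$ in decoding radius is moot. What is actually needed is a depth-$\ell$ recursion: at level~$i$, list-decode $\cal{C}_i$ (in the Euclidean metric on $\R_2^n$) to radius $2^i e_0$, and for each candidate recurse on the scaled residual. This recursion is exact---no radius is lost at any step.

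Second, and more seriously, you are missing the key obstacle to polynomial running time. The recursion has branching factor equal to the per-level list size~$S_i$ and depth $\ell=\Theta(\log n)$, so the total work is $\prod_i S_i\le S^\ell$. A merely polynomial bound $S=\poly(n)$ gives $n^{\Theta(\log n)}$, which is superpolynomial. The crux of the argument is therefore to force $S_i=O(1)$ at \emph{every} level. In the KV analysis the list bound depends on $1/R^*$ and on $1/\varepsilon$; to keep both constant one must take $\varepsilon$ constant (as the theorem assumes) \emph{and} choose the tower so that even the top-level code has $R^*$ bounded away from~$0$, i.e.\ $d_\ell=4^\ell\le(1-\Omega(1))n$. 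This is exactly why $\ell=\log_4\Theta(n/\log n)$ rather than $\log_4 n$, and it is a point your plan does not address. Identifying ``tuning the Koetter--Vardy multiplicity parameter'' as the main obstacle is not the right diagnosis; the real issue is controlling list sizes uniformly across $\Theta(\log n)$ recursive calls.
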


We remark that $\lambda_{1}(\Lambda)/\sqrt{2}$ is a natural generic
barrier for (combinatorially) efficient list decoding on lattices. On
the one hand, Rankin's bound implies that for any lattice
$\Lambda \subset \R^{n}$ and target point, there are at most~$2n$
lattice points within distance $\lambda_{1}(\Lambda)/\sqrt{2}$ of the
target (this is analogous to Johnson's bound for codes). On the other
hand, for any constant $\epsilon > 0$ there exists a family of
``locally dense'' lattices $\Lambda_{n} \subset \R^{n}$ and target
points $\vect_{n} \in \R^{n}$ for which there are
$\exp(n^{\Omega(1)})$ points in~$\Lambda_{n}$ within distance
$(1/\sqrt{2}+\epsilon) \cdot \lambda_{1}(\Lambda_{n})$
of~$\vect_{n}$~\cite{DBLP:journals/siamcomp/Micciancio00}. In fact, as
shown in~\cite{DBLP:journals/toc/Micciancio12}, the family of lattices
from \cref{thm:main} has this property, but with~$1/\sqrt{2}$ replaced
by~$\sqrt{2/3}$. It is an interesting question whether constant
factors smaller than~$\sqrt{2/3}$ for local density, or larger than
$1/\sqrt{2}$ for efficient list decoding, can be obtained for this
family.

Finally, we point out that the lattices from \cref{thm:main} also have
efficiently constructible \emph{short} bases, consisting of vectors
whose norms are within a~$\sqrt{n}$ factor of the minimum distance
(see \cref{rem:construction-d-short-basis}). This can be useful for
\emph{encoding}, where one typically wants to map a message to a
relatively short lattice point: its norm corresponds to the power
required to send it, and one wants to minimize the power-to-noise
ratio. By contrast, for the uniquely decodable lattices studied
in~\cite{DBLP:journals/dcc/DucasP19}, relatively short nonzero lattice
vectors are not known, and may even be infeasible to compute, so it is
unclear whether power-efficient encoding is possible.

\paragraph{Techniques and organization.}

The family of lattices from \cref{thm:main} is obtained by applying
``Construction~D''~\cite{conwaysloane99:_splag} to a certain tower of
BCH codes, which are subfield subcodes of certain Reed-Solomon
codes. It was shown
in~\cite{barnes83:_lattice_packings,DBLP:journals/toc/Micciancio12}
that this family of lattices yields an excellent packing, with
normalized minimum distance $\Omega(\sqrt{n/\log n})$. We obtain an
efficient list-decoding algorithm for this family as follows.

First, in \cref{sec:construction-a} we give an efficient list-decoding
algorithm for any prime-subfield subcode
$\cal{C} = \cal{R} \cap \F_{p}^{n}$ of a Reed-Solomon
code~$\cal{R} \subseteq \F_{p^{r}}^{n}$ (which in particular includes
the BCH codes mentioned above), where error is measured in the
\emph{Euclidean} (rather than Hamming) norm. More specifically,
embedding the code~$\cal{C} \subseteq \F_{p}^{n}$ into $(\R/p\Z)^{n}$
in the natural way, the algorithm recovers all codewords of~$\cal{C}$
that are within squared Euclidean norm nearly $d/2$ of an arbitrary
received word in $(\R/p\Z)^{n}$, where~$d$ is the minimum
\emph{Hamming} distance of~$\cal{R}$ (and hence is a lower bound on
the minimum distance of~$\cal{C}$). The core of our algorithm is the
Koetter-Vardy~\cite{DBLP:journals/tit/KoetterV03a} ``soft-decision''
list decoder for Reed-Solomon codes, which takes as one of its inputs
a \emph{reliability vector} representing the probability of each field
element at each position of the received word. However, in our setting
there are no probabilities, just a fixed received word in
$(\R/p\Z)^{n}$. We use it to define an \emph{implicit} reliability
vector, and show that with this vector the Koetter-Vardy algorithm
recovers all codewords within the desired Euclidean norm. In fact,
using the framework of~\cite{KoetterV01:_optimal_weight} we show that
our choice of reliability vector yields an optimal tradeoff between
code dimension and Euclidean decoding distance for the analysis of the
Koetter-Vardy soft-decision decoder, for squared distances at most
$n/4$.

Next, in \cref{sec:construction-d} we give a list-decoding algorithm
for any Construction D lattice defined by a tower of codes over a
prime field~$\F_{p}$, using subroutines that list decode the component
codes to appropriate distances (in any~$\ell_{q}$ norm, including the
Euclidean norm). The algorithm naturally arises from the iterative
definition of Construction D, and works by recursively recovering each
nearby lattice vector from its least- to most-significant digit in
base~$p$. (A similar algorithm for \emph{unique} decoding of
Construction D lattices was studied
in~\cite{DBLP:conf/globecom/MatsumineKO18}, with a focus on simulation
in moderate dimensions for BCH codes, but no theorems about its
behavior were given.) The depth of the recursion is the number of
codes in the tower, and the branching factor at each level of the
recursion is the size of the list output by the decoding subroutine at
that level.

Finally, in \cref{sec:near-minkowski} we recall the lattice family
obtained by instantiating Construction D with a certain tower of BCH
codes, and instantiate the subroutines in the list-decoding algorithm
from \cref{sec:construction-d} with the algorithm from
\cref{sec:construction-a}. Because a tower of $\Theta(\log n)$ codes
is needed to obtain the desired density, to obtain a polynomial-time
algorithm we need to ensure constant $O(1)$ list sizes for each code
in the tower. This allows us list decode the lattice to within a
$1/\sqrt{2}-\varepsilon$ factor of its minimum distance in the
Euclidean norm, for any constant $\varepsilon > 0$.

\paragraph{Acknowledgments.}

We thank Daniele Micciancio and the anonymous reviewers for helpful
comments on the presentation.


\section{Preliminaries}
\label{sec:preliminaries}

For real vectors
$\vecv = (v_{1}, \ldots, v_{n}), \vecw = (w_{1}, \ldots, w_{n}) \in
\R^{n}$, define their inner product as
$\inner{\vecv, \vecw} := \sum_{i=1}^{n} v_{i}w_{i}$. For any positive
integer~$p$, let $\Z_{p} := \Z/p\Z$ denote the quotient group of the
integers modulo~$p$. When~$p$ is prime, we identify~$\Z_{p}$ with the
finite field~$\F_{p}$ in the obvious way. Fixing some arbitrary set of
representatives of~$\Z_{p}$ (e.g., $\set{0,1,\ldots,p-1}$), for
$v \in \Z_{p}$ let $\overline{v} \in \Z$ denote its representative,
and extend this notation coordinate-wise to vectors over~$\Z_{p}$.
Similarly, let $\R_{p} := \R/p\Z$ be the quotient group of the real
numbers modulo the integer multiples of~$p$; then $\Z_{p}$ is a
subgroup of~$\R_{p}$.

For $y \in \R_{p}$, we define
$\abs{y} := \min \set{ \abs{z} : z \in \R \cap (y + p\Z)}$ to be the
minimal absolute value over all real numbers congruent to~$y$
(modulo~$p\Z$). Equivalently, it is the absolute value of the single
element in $[-p/2,p/2) \cap (y + p\Z)$. For
$\vecy = (y_{1}, \ldots, y_{n}) \in \R_{p}^{n}$, we define the
Euclidean norm
\[ \length{\vecy} := \parens[\Big]{\sum_{i=1}^{n}
    \abs{y_{i}}^{2}}^{1/2} = \min \set {\length{\vecz} : \vecz \in
    \R^{n} \cap (\vecy + p\Z^{n})} . \]

\paragraph{Error-correcting codes.}

For a prime power~$q$ and nonnegative integers $k \leq n \leq q$, a
\emph{Reed-Solomon code}~\cite{ReedSolomon} of length~$n$ and
dimension~$k$ over~$\F_{q}$ is the set
\[ \cal{R}_{\F_{q}}[n,k] := \set{(p(\alpha_{1}), p(\alpha_{2}),
    \ldots, p(\alpha_{n})) \in \F_{q}^{n} : p(X) \in \F_{q}[X],
    \deg(p) < k} \] for some fixed distinct evaluation points
$\alpha_{1}, \ldots, \alpha_{n} \in \F_{q}$.  It is easy to see
that~$\cal{R}$ is a linear code (i.e., a linear subspace), and that
its minimum Hamming distance
$d := \min_{\vecc \in \cal{R}} \wt(\vecc) = n-k+1$, because a
polynomial~$p(X) \in \F_{q}[X]$ can have at most $\deg(p)$ zeros (and
this is attainable when $\deg(p) \leq q$). It is also clear that for
any $k \leq k'$ we have $\cal{R}[n, k] \subseteq \cal{R}[n, k']$. For
our purposes it is convenient to define the adjusted rate
$R^{*} = (k-1)/n = 1-d/n$.

If~$q$ is a power of a prime~$p$, making~$\F_{p}$ a subfield
of~$\F_{q}$, \emph{BCH
  codes}~\cite{BoseRayChaudhuri,hocquenghem59:_codes} can be obtained
as $\F_{p}$-subfield subcodes of certain Reed-Solomon codes. More
specifically, letting $\F_{q}^{*} = \F_{q} \setminus \set{0}$ be the
set of evaluation points, the (primitive, narrow-sense) BCH code of
designed distance~$1 \leq d \leq n$ is defined as
\[ \cal{C}_{\F_{q}}[n=q-1,d] := \cal{R}_{\F_{q}}[n,k=n-d+1] \cap
  \F_{p}^{n}. \] Clearly, $\cal{C}_{\F_{q}}[n,d]$ has minimum distance
at least~$d$ (because it is a subset of a distance-$d$ code). It
follows from the nesting property of Reed-Solomon codes that for any
$d \leq d'$ we have
$\cal{C}_{\F_{q}}[n, d] \supseteq \cal{C}_{\F_{q}}[n, d']$.  It is
also known that $\cal{C}_{\F_{q}}[n,d]$ is efficiently constructible,
in the sense that an $\F_{p}$-basis for it can be produced in time
$\poly(n)$. Finally, its dimension~$k$ satisfies the following well
known bound (see, e.g.,
\cite[Exercise~5.10]{GRS:_essential_coding_theory}).

\begin{lemma}
  \label{lem:bch-dimension}
  For $1 \leq d \leq n = q-1$, the BCH code $\cal{C}_{\F_{q}}[n,d]$
  has codimension $n-k \leq \ceil{\frac{p-1}{p} (d-1)} \log_{p} q$.
\end{lemma}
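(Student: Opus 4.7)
The plan is to characterize Reed--Solomon codewords via the discrete Fourier transform, then count the additional $\F_p$-linear constraints imposed by intersecting with $\F_p^n$. Fix a primitive element $\alpha \in \F_q^*$ and take the evaluation points to be $1, \alpha, \ldots, \alpha^{n-1}$; for $\vecc \in \F_q^n$ define $\hat c_j := \sum_{i=0}^{n-1} c_i \alpha^{ij}$ for $j \in \Z_n$. A direct computation, using that $\sum_{i} \alpha^{i\ell}$ equals $n \in \F_q^*$ when $n \mid \ell$ and vanishes otherwise, shows that $\vecc \in \cal{R}_{\F_q}[n, k]$ if and only if $\hat c_j = 0$ for every $j \in B := \set{1, 2, \ldots, d-1}$, where $d = n - k + 1$.

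Second, the Frobenius identity $c_i = c_i^p$ (valid for $\vecc \in \F_p^n$) yields $\hat c_{pj \bmod n} = \hat c_j^p$, so the zero set of $\hat c$ is closed under the permutation $j \mapsto pj \bmod n$ of $\Z_n$. Hence every $\vecc \in \cal{C}_{\F_q}[n, d]$ vanishes on $B^*$, the union of $p$-cyclotomic cosets of $\Z_n$ meeting $B$. Via the DFT, $\F_p^n$ is in $\F_p$-linear bijection with the Frobenius-compatible tuples $(\hat c_j)_j$: for each coset $C$ of size $m_C$ (which divides $r := \log_p q$, since $p^r \equiv 1 \pmod n$), the value $\hat c_{j_0}$ at a representative $j_0 \in C$ ranges freely over $\F_{p^{m_C}}$, while the remaining $\hat c_j$ on $C$ are determined by Frobenius. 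A dimension count then gives $\dim_{\F_p} \cal{C}_{\F_q}[n,d] = \sum_{C \cap B = \emptyset} m_C = n - |B^*|$.

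Finally, I would bound $|B^*| \leq r \cdot N$, where $N$ is the number of cosets meeting $B$, via the observation that any $j \in B$ divisible by $p$ is redundant: $j/p$ satisfies $1 \leq j/p < d$ and hence lies in $B$, and $j/p$ shares a coset with $j$ (with no modular wraparound, since $p \cdot (j/p) = j < n$). Iterating, every coset meeting $B$ contains some element of $B' := \set{j \in B : p \nmid j}$, so $N \leq |B'| = (d-1) - \lfloor (d-1)/p \rfloor = \lceil (p-1)(d-1)/p \rceil$. Multiplying by $r = \log_p q$ yields the claimed bound $n - k = |B^*| \leq \lceil (p-1)(d-1)/p \rceil \log_p q$.

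The step most in need of care is the $\F_p$-dimension count in the second paragraph: one must cleanly verify the bijection between $\F_p^n$ and Frobenius-compatible tuples, and in particular that each coset representative truly ranges over the subfield $\F_{p^{m_C}}$, so the count is exact rather than merely an upper bound. Beyond that, the coset bookkeeping in the final step is elementary, and picking the ``narrow-sense'' DFT convention above is just a convenience to make the divide-by-$p$ argument immediate.
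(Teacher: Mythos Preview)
Your argument is correct and is the standard proof of this bound: characterize the Reed--Solomon code via the discrete Fourier transform over $\F_q$, use the Frobenius relation $\hat c_{pj}=\hat c_j^{\,p}$ to identify $\cal{C}_{\F_q}[n,d]$ with the Frobenius-compatible tuples vanishing on the union $B^*$ of $p$-cyclotomic cosets meeting $\{1,\ldots,d-1\}$, and then bound $|B^*|$ by (coset size)$\times$(number of cosets) via the divide-by-$p$ trick. The dimension count you flag as delicate is fine once you note (as you do) that $m_C \mid r$ and that the inverse DFT carries Frobenius-compatible tuples back into $\F_p^n$; the identity $(d-1)-\lfloor (d-1)/p\rfloor=\lceil (p-1)(d-1)/p\rceil$ is an easy case check.

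As for comparison with the paper: there is nothing to compare. The paper does not prove \cref{lem:bch-dimension}; it is stated in the preliminaries as a well-known fact with a citation to a textbook exercise. Your write-up is exactly the argument one would expect that exercise to elicit.
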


\paragraph{Lattices.}

A \emph{lattice}~$\Lambda$ is a discrete additive subgroup
of~$\R^{n}$. If its linear span is~$\R^{n}$, the lattice is said to be
\emph{full rank}; from now on, we limit our attention to such
lattices. Any lattice is generated (non-uniquely) as the integer
linear combinations of the vectors in a \emph{basis}
$\matB = \set{\vecb_{1}, \ldots, \vecb_{n}}$, as
$\Lambda = \set{\sum_{i=1}^{n} z_{i} \vecb_{i} : z_{i} \in \Z}$. The
\emph{minimum distance}
$\lambda_{1}(\Lambda) := \min_{\vecv \in \Lambda \setminus
  \set{\veczero}} \length{\vecv}$ is the length of any shortest
nonzero lattice vector. The \emph{determinant}
$\det(\Lambda) := \vol(\R^{n}/\Lambda) = \abs{\det(\matB)}$ is the
absolute value of the determinant of any basis of the lattice.

A standard way of measuring the ``density'' of a lattice is to
normalize its minimum distance by its (dimension-adjusted)
determinant. More specifically, for a (full-rank) lattice
$\Lambda \subset \R^{n}$, its normalized minimum distance is
$\lambda_{1}(\Lambda)/\det(\Lambda)^{1/n}$. A simple application of a
theorem of Minkowski yields the inequality
$\lambda_{1}(\Lambda) \leq \sqrt{n} \cdot \det(\Lambda)^{1/n}$, which
is often called Minkowski's bound. This bound is known to be tight up
to a constant factor, i.e., there exists an infinite family of
$n$-dimensional lattices $\Lambda_{n}$ of unit determinant for which
$\lambda_{1}(\Lambda) = \Omega(\sqrt{n})$.






\section{Decoding Reed-Solomon Subfield Subcodes in the Euclidean Norm}
\label{sec:construction-a}

In this section we give a list decoder, for error measured in the
\emph{Euclidean norm}, for the $\F_{p}$-subfield
subcode~$\cal{C} = \cal{R} \cap \F_{p}^{n}$ (for a prime~$p$) of any
Reed-Solomon code~$\cal{R} = \cal{R}_{\F_{q}}[n,k > 1]$, where
$q=p^{r}$ for some $r \geq 1$. (In particular, this includes BCH
codes.) More specifically, given a received word in $\R_{p}^{n}$, the
decoder outputs all codewords in~$\cal{C}$ that are within Euclidean
norm nearly $\sqrt{d/2}$ of the received word, where~$d = n-k+1$ is
the minimum Hamming distance of the Reed-Solomon code (and hence a
lower bound on the minimum Hamming distance of~$\cal{C}$).

The heart of our algorithm is the ``soft-decision'' list-decoding
algorithm of Koetter and Vardy~\cite{DBLP:journals/tit/KoetterV03a}
for Reed-Solomon codes, which uses data about the \emph{likelihood} of
each alphabet symbol in each position. More precisely, to decode a
length-$n$ Reed-Solomon code over~$\F_{q}$, their algorithm takes as
one of its inputs a so-called \emph{reliability vector}
$\Pi \in [0,1]^{qn}$. Such a vector consists of~$n$ length-$q$ blocks,
where the $j$th entry of the $i$th block represents the probability
that the transmitted codeword had the~$j$th element of~$\F_{q}$ in
its~$i$th coordinate. That is, each of the length-$q$ blocks in $\Pi$
is a probability mass function, and in particular has unit~$\ell_{1}$
norm.

In our setting, we have no explicit probabilities of transmitted
symbols, only a received word $\vecy \in \R_{p}^{n}$. In
\cref{sec:reliability-vectors} we define a mapping that converts the
received word to an \emph{implicit} reliability vector, which we
provide to the Koetter-Vardy soft decoder. As we show in
\cref{sec:decoding-algorithm}, this choice of reliability vector
allows the decoder to find all codewords within a desired Euclidean
distance of the received vector.\footnote{It seems likely that this
  approach generalizes somewhat to Reed-Solomon codes themselves (over
  prime-power fields). We restrict our attention to prime-subfield
  subcodes because they admit a natural Euclidean norm, and they are
  required for defining Construction D lattices (see
  \cref{sec:construction-d}).} Moreover, in
\cref{sec:optimal-reliability} we show that our choice of reliability
vector is essentially optimal for the range of decoding distances that
are relevant to this work (and even somewhat beyond).


\subsection{Reliability Vectors}
\label{sec:reliability-vectors}

We now define the mapping from received words to reliability
vectors. For $c \in \F_{q} \setminus \F_{p}$ define
$[c] = \veczero \in [0,1]^{p}$, and for $c \in \F_p$ define
$[c] \in \bit^p$ to be the indicator vector of~$c$, i.e., the entry
indexed by~$c$ is~1 and all other entries are~0. We extend this
definition to~$\R_p$ by mapping each interval
$(c, c+1) \subseteq \R_{p}$ to the open line segment
$\set{(1 - t)[c] + t[c + 1] : t \in (0, 1)} \subseteq [0,1]^{p}$ in
the natural way. In other words, for $y \in [c, c+1]$ we define
\begin{equation}
  \label{eq:bracket-y}
  [y] := [c] + ([c + 1] - [c]) \cdot \abs{y - c}.
\end{equation}
We extend the notation $[\cdot]$ to vectors by applying it entry-wise,
mapping $n$-dimensional vectors to~$[0,1]^{pn}$.\footnote{This can
  equivalently be viewed as outputting a matrix in
  $[0,1]^{p \times n}$, which is the perspective used more
  in~\cite{DBLP:journals/tit/KoetterV03a}, but the vector view will be
  more natural for us.}  Following the terminology
of~\cite{DBLP:journals/tit/KoetterV03a}, we call
$[\vecy] \in [0,1]^{pn}$ the \emph{reliability vector} of a received
word~$\vecy \in \R_{p}^{n}$.

\begin{lemma}
  \label{lem:embedding}
  For any $\vecy \in \R_p^{n}$ and $\vecc \in \Z_p^{n}$, we have
  \begin{equation}
    \length{[\vecy] - [\vecc]}^2 \leq 2 \length{\vecy - \vecc}^2 .
  \end{equation}
\end{lemma}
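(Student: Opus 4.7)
The plan is to reduce to a per-coordinate bound and then prove that the map $[\cdot] \colon \R_{p} \to \R^{p}$ is $\sqrt{2}$-Lipschitz. Writing $\vecy = (y_{1},\ldots,y_{n})$ and $\vecc = (c_{1},\ldots,c_{n})$, both sides of the desired inequality decompose coordinate-wise as $\length{[\vecy] - [\vecc]}^{2} = \sum_{i} \length{[y_{i}] - [c_{i}]}^{2}$ and $2\length{\vecy - \vecc}^{2} = \sum_{i} 2\abs{y_{i} - c_{i}}^{2}$, so it suffices to show $\length{[y] - [c]}^{2} \leq 2\abs{y - c}^{2}$ for any single $y \in \R_{p}$ and $c \in \Z_{p}$ and then sum.

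For this single-coordinate bound, I will establish the stronger inequality $\length{[y] - [c]} \leq \sqrt{2}\,\abs{y - c}$. By~\eqref{eq:bracket-y}, on each closed unit interval $[j, j+1] \subseteq \R_{p}$ (indices taken mod~$p$) the map $[\cdot]$ is affine, sending the endpoints to the standard basis vectors $\mathbf{e}_{j}, \mathbf{e}_{j+1} \in \R^{p}$, which are at Euclidean distance $\sqrt{2}$. Since the interval has length~$1$, the restriction of $[\cdot]$ to it is $\sqrt{2}$-Lipschitz. Now view $\R_{p}$ as a circle of circumference~$p$ and let $\gamma$ be the shorter of the two arcs from $c$ to $y$; its length is exactly $\abs{y - c}$ by the definition of $\abs{\cdot}$ on $\R_{p}$. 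The image of $\gamma$ under $[\cdot]$ is a piecewise-linear path in $\R^{p}$ whose breakpoints are the basis vectors $\mathbf{e}_{j}$ visited as $\gamma$ crosses integer points, and whose total Euclidean length is $\sqrt{2}\,\abs{y - c}$ (summing $\sqrt{2}$ times the length of each maximal subinterval of $\gamma$ lying inside a single $[j, j+1]$). Applying the triangle inequality to the successive breakpoints of this curve yields $\length{[y] - [c]} \leq \sqrt{2}\,\abs{y - c}$, and squaring and summing over coordinates gives the lemma.

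The argument is essentially mechanical and I do not expect a serious obstacle. The only mild point requiring care is the wrap-around at $j = p - 1$, where the ``interval'' $[p-1, p] \subseteq \R_{p}$ identifies~$p$ with~$0$ and the vertex $\mathbf{e}_{p}$ with $\mathbf{e}_{0}$; this is already baked into the definition of $[\cdot]$ via~\eqref{eq:bracket-y} and causes no difficulty. The bound is sharp, with equality attained whenever $\gamma$ lies inside a single unit interval (in particular, whenever $\abs{y - c} \leq 1$), since on each such piece the Lipschitz constant $\sqrt{2}$ is achieved.
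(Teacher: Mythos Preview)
Your proof is correct. Both you and the paper first reduce to the single-coordinate case; after that the arguments diverge. The paper does a two-case split: if $\abs{y-c} \geq 1$ then $[y]$ and $[c]$ have disjoint supports, so $\length{[y]-[c]}^{2} \leq \length{[y]}^{2}+1 \leq 2 \leq 2\abs{y-c}^{2}$; if $\abs{y-c} < 1$ then $y$ lies in $[c-1,c]$ or $[c,c+1]$ and a direct computation from~\eqref{eq:bracket-y} gives the exact equality $\length{[y]-[c]}^{2} = 2\abs{y-c}^{2}$. You instead give a unified path-length argument, observing that $[\cdot]$ is piecewise affine with Lipschitz constant~$\sqrt{2}$ on each unit arc and then bounding the endpoint distance by the image path length via the triangle inequality. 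Your route is a bit more conceptual and actually yields the full $\sqrt{2}$-Lipschitz property of $[\cdot]\colon \R_{p}\to\R^{p}$ (for arbitrary pairs, not just $c\in\Z_{p}$); the paper's case split is more hands-on and makes explicit where equality holds versus where slack appears.
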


\begin{proof}
  It suffices to prove the lemma for $n = 1$ because both sides split
  as sums over their components. Let $y \in \R_p$ and $c \in \Z_p$. If
  $\abs{y - c} \geq 1$, then the nonzero entries of
  $[y],[c] \in [0,1]^{p}$ are in distinct positions, hence
  $\length{[y] - [c]}^2 = \length{[y]}^{2} + \length{[c]}^{2} \leq
  \length{[y]}^{2} + 1 \leq 2$. Otherwise $\abs{y - c} < 1$; assume
  that $y \in [c, c + 1]$. Then by \cref{eq:bracket-y},
  \begin{equation}
    \length{[y] - [c]}^2 =
    \length{[c + 1] - [c]}^2 \cdot \abs{y - c}^2 = 2\abs{y - c}^2.
  \end{equation}
  The case $y \in [c-1,c]$ proceeds symmetrically.
\end{proof}

\subsection{Decoding Algorithm}
\label{sec:decoding-algorithm}

Here we define and analyze our decoding algorithm, presented in
\cref{alg:euclidean-bch-decoder}. There and in what follows, we let
$\KV(\Pi, S)$ denote the Koetter-Vardy soft-decision decoder for
$\cal{R}$, run on input reliability vector $\Pi \in [0,1]^{pn}$ with
output list size limited to~$S$.\footnote{To be completely accurate,
  the KV algorithm as defined in~\cite{DBLP:journals/tit/KoetterV03a}
  takes a reliability vector in $[0,1]^{qn}$; we can either
  appropriately pad the $[0,1]^{pn}$-vector with zeros, or, to be more
  efficient, modify the algorithm to work directly with
  $[0,1]^{pn}$-vectors in the obvious way.}

Although we use the Koetter-Vardy algorithm essentially as a `black
box,' we include the following high-level description of its operation
for completeness. First, it defines a `multiplicity vector'
$M = \floor{\lambda \Pi} \in \Z^{pn}$ for some suitably large scaling
factor~$\lambda \in \R^{+}$, which is determined based on the desired
list size bound~$S$. It then proceeds by a generalization of the
list-decoding algorithm of Guruswami and
Sudan~\cite{DBLP:journals/tit/GuruswamiS99}. More specifically, it
uses~$M$ to set up a system of linear equations, which it solves to
compute the minimal bivariate polynomial $\mathcal{Q}_{M}(X,Y)$ having
zeroes with multiplicities given by~$M$ at specified points. Finally,
it (partially) factors~$\mathcal{Q}_{M}$ to identify all the factors
of the form $Y-f(X)$, which directly correspond to the output list of
codewords. The running time of the algorithm is primarily determined
by the number of equations in the linear system, which is
asymptotically the sum of the squares of the entries of~$M$.

\begin{algorithm}
  \caption{List-decoding algorithm for code $\cal{C}$, for the
    Euclidean norm}
  \label{alg:euclidean-bch-decoder}
  \begin{algorithmic}
    \REQUIRE Received word $\vecy \in \R_{p}^n$ and $\varepsilon > 0$.

    \ENSURE A list of the codewords $\vecc \in \cal{C}$ for which
    $\length{\vecy - \vecc}^{2} \leq (1 - \varepsilon)d/2$.

    \begin{enumerate}
    \item Let $L = \KV([\vecy], S) \subseteq \cal{R}$ be the output
      list of the soft-decision decoder of~$\cal{R}$ on reliability
      vector~$[\vecy]$, with list size limited to
      \begin{equation}
        \label{eq:list-size-bound}
        S := \frac{\frac{1}{R^*} + \frac{1}{\sqrt{2R^*}}}
        {1 - \sqrt{\frac{R^*}{\varepsilon + (1-\varepsilon)R^{*}}}}.
      \end{equation}
    \item Output $\set{\vecc \in L \cap \F_{p}^{n} :
        \length{\vecy-\vecc}^{2} \leq (1-\varepsilon)d/2}$.
    \end{enumerate}
  \end{algorithmic}
\end{algorithm}

\begin{theorem}[{{Adapted from \cite[Theorem~17]{DBLP:journals/tit/KoetterV03a}.}}]
  \label{thm:inner-product-condition}
  Let $\vecy \in \R_{p}^{n}$ and $S > 0$. The soft-decision decoder
  $\KV([\vecy], S)$ for~$\cal{R}$ outputs a list of all the at
  most~$S$ codewords $\vecc \in \cal{R}$ such that
  \begin{equation}
    \label{eq:inner-product-condition}
    \frac{\inner{[\vecy], [\vecc]}}{\length{[\vecy]}}
    \geq
    \frac{\sqrt{k - 1}}{1 - \frac{1}{S}
    \left(
      \frac{1}{R^*} + \frac{1}{\sqrt{2R^*}}
    \right)} .
  \end{equation}
  Additionally, the algorithm runs in time polynomial in~$n$, $\log q$
  and~$S$.
\end{theorem}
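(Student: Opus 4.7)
The plan is to invoke the original Koetter--Vardy soft-decision theorem essentially as a black box, and then translate its statement (which is naturally phrased in terms of a multiplicity matrix $M \in \Z^{pn}$) into one about the reliability vector $[\vecy] \in [0,1]^{pn}$. Recall that KV works by picking a scaling $\lambda \in \R^{+}$, forming $M = \floor{\lambda [\vecy]}$, and then running a Guruswami--Sudan-style interpolation/factorization procedure. The original theorem guarantees that the decoder outputs every codeword $\vecc \in \cal{R}$ whose ``score'' $\inner{M,[\vecc]}$ exceeds a cost threshold of roughly $\sqrt{(k-1)\cdot 2\sum_{i,j}\binom{m_{i,j}+1}{2}}$, and that the list size is controlled by the total multiplicity budget $\sum m_{i,j}$.

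The first step is to specialize this to our reliability vector. Since each of the $n$ length-$p$ blocks of $[\vecy]$ has $\ell_{1}$ norm equal to~$1$, the total multiplicity $\sum_{i,j} m_{i,j} \approx \lambda n$, which together with the Guruswami--Sudan-type list-size analysis (as made explicit by Koetter and Vardy) gives an output list of size at most $S$ provided $\lambda$ is no larger than some explicit linear function of $S$ and $R^{*}$. Simultaneously, using $\binom{m+1}{2} = m(m+1)/2 \leq \lambda^{2} \Pi_{i,j}^{2}/2 + \lambda \Pi_{i,j}/2$, the cost threshold evaluates to at most $\sqrt{(k-1)\cdot(\lambda^{2}\length{[\vecy]}^{2} + \lambda n)}$, since $\sum_{i,j}\Pi_{i,j}^{2} = \length{[\vecy]}^{2}$ and $\sum_{i,j}\Pi_{i,j}=n$. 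Meanwhile the score $\inner{M,[\vecc]} \geq \lambda \inner{[\vecy],[\vecc]} - O(1)\cdot n$ per block where $[\vecc]$ is nonzero; since $[\vecc]$ has at most $n$ nonzero entries, the accounting goes through cleanly.

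The second step is bookkeeping: divide the resulting inequality $\lambda \inner{[\vecy],[\vecc]} > \sqrt{(k-1)(\lambda^{2}\length{[\vecy]}^{2} + \lambda n)}$ through by $\lambda \length{[\vecy]}$ and rearrange. Pulling the $\sqrt{k-1}$ in front and collecting the lower-order $\lambda n$ term into the correction factor, the right-hand side becomes $\sqrt{k-1}$ divided by $1 - \tfrac{1}{S}\bigl(\tfrac{1}{R^{*}}+\tfrac{1}{\sqrt{2R^{*}}}\bigr)$, where the two summands in the parenthesis track, respectively, the total-multiplicity contribution (scaling like $n/\lambda$, which via the list-size bound scales like $1/(SR^{*})$) and the $\sqrt{\lambda n}$ crossterm from the square-root expansion (which scales like $1/(S\sqrt{R^{*}})$). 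This is exactly the inequality~\eqref{eq:inner-product-condition}.

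The main obstacle, and the only genuinely delicate part, is this last bookkeeping: the $\frac{1}{R^{*}} + \frac{1}{\sqrt{2R^{*}}}$ term is the penalty for passing from the ``soft'' reliability-vector bound to the integer multiplicity matrix via flooring, and getting its exact form requires keeping careful track of which terms are $O(1/\lambda)$ versus $O(1/\sqrt{\lambda})$ relative to the dominant $\lambda^{2}\length{[\vecy]}^{2}$ contribution and using the list-size-to-$\lambda$ relationship consistently. Everything else is a mechanical specialization of the Koetter--Vardy theorem to the particular reliability vector $[\vecy]$; the runtime claim is inherited directly from their algorithm, since the linear system it solves has size polynomial in $n$, $\log q$, and the total multiplicity budget, which is in turn polynomial in $S$.
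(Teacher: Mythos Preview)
Your approach re-derives the Koetter--Vardy bound from the underlying multiplicity-matrix analysis, whereas the paper simply invokes \cite[Theorem~17]{DBLP:journals/tit/KoetterV03a} as a black box and records the single modification needed to adapt it. That modification is precisely the step your outline glosses over: to collapse the right-hand side into the stated form with the constant~$2$ inside $\frac{1}{\sqrt{2R^{*}}}$, one must use the lower bound $\length{[\vecy]}^{2} \geq n/2$. In the original Koetter--Vardy statement the analogous step uses only $\length{\Pi}^{2} \geq n/q$ (valid for an arbitrary reliability vector over~$\F_{q}$), which would put a~$q$ where the present theorem has a~$2$. The sharper bound holds here because each length-$p$ block of~$[\vecy]$ has at most \emph{two} nonzero entries summing to~$1$, hence squared $\ell_{2}$ norm at least~$1/2$; you mention the unit $\ell_{1}$ norm of each block (which controls the total multiplicity), but never the two-sparsity, and without it your bookkeeping cannot eliminate the~$\length{[\vecy]}$ dependence and land on $\sqrt{2R^{*}}$ specifically. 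Your sentence ``the $\sqrt{\lambda n}$ crossterm \ldots\ scales like $1/(S\sqrt{R^{*}})$'' is where this gap shows: after dividing through by $\lambda\length{[\vecy]}$ the crossterm is $n/(\lambda\length{[\vecy]}^{2})$, and you need $\length{[\vecy]}^{2} \geq n/2$ to bound it.

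The paper's route is much shorter exactly because all of the ``delicate bookkeeping'' you describe is already carried out in~\cite{DBLP:journals/tit/KoetterV03a}; the only new content is the two-nonzero-entries observation, plus the remark that zero-padding $[\vecy]$ and $[\vecc]$ from $[0,1]^{pn}$ to $[0,1]^{qn}$ leaves both $\inner{[\vecy],[\vecc]}$ and $\length{[\vecy]}$ unchanged (so the original theorem applies verbatim). Your from-scratch derivation would also work once the missing bound is inserted, but it duplicates effort already in the cited source.
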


\begin{proof}
  The proof is identical to that
  of~\cite[Theorem~17]{DBLP:journals/tit/KoetterV03a}, but using the
  fact that both the numerator and denominator of
  $\inner{[\vecy],[\vecc]}/\length{[\vecy]}$ are unchanged when using
  $[\vecc] \in [0,1]^{qn}$ (instead of~$[0,1]^{pn}$) as defined
  in~\cite{DBLP:journals/tit/KoetterV03a} with the appropriate
  (zero-padded) $[\vecy] \in [0,1]^{qn}$, and replacing the inequality
  $\length{[\vecy]}^{2} \geq n/q$ with
  $\length{[\vecy]}^{2} \geq n/2$. The latter inequality holds because
  each block of~$[\vecy]$ has unit~$\ell_{1}$ norm, but has at most
  two nonzero entries.

  The claim on the running time follows from the fact that the
  algorithm runs in time polynomial in~$n$, $\log q$, and the ``cost''
  of the employed multiplicity matrix, which is shown
  in~\cite[Lemma~15]{DBLP:journals/tit/KoetterV03a} to be polynomial
  in~$n$ and~$S$.
\end{proof}

The following immediate corollary gives a more geometric sufficient
condition for a subfield subcode word to be recovered.

\begin{corollary}
  \label{cor:anglecondition}
  Under the same setup as in \cref{thm:inner-product-condition}, a
  subfield subcode word $\vecc \in \cal{C} = \cal{R} \cap \F_{p}^{n}$
  will be in the list output by $\KV([\vecy], S)$ if the angle~$\beta$
  between the vectors $[\vecy], [\vecc] \in [0,1]^{p n}$ satisfies
  \begin{equation}
    \label{eq:anglecondition}
    \cos \beta \geq \frac{\sqrt{R^*}}{1 - \frac{1}{S}
      \parens*{\frac{1}{R^*} + \frac{1}{\sqrt{2R^*}}}} .
  \end{equation}
\end{corollary}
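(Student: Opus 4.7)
The plan is to derive the angle condition directly from the inner-product condition of \cref{thm:inner-product-condition} by computing $\length{[\vecc]}$ explicitly for subfield subcode words and substituting. The only substantive observation I would use is that for $\vecc \in \cal{C} = \cal{R} \cap \F_{p}^{n}$, every coordinate $c_{i}$ lies in $\F_{p}$, so by the definition in \cref{sec:reliability-vectors}, $[c_{i}] \in \bit^{p}$ is a standard basis vector. Hence each $\length{[c_{i}]}^{2} = 1$, and summing over coordinates gives $\length{[\vecc]}^{2} = n$.

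Next I would rewrite the left-hand side of \cref{eq:inner-product-condition} using the definition of the angle between $[\vecy]$ and $[\vecc]$, namely $\inner{[\vecy], [\vecc]} = \length{[\vecy]} \cdot \length{[\vecc]} \cdot \cos\beta$. This yields
\[
\frac{\inner{[\vecy], [\vecc]}}{\length{[\vecy]}} = \length{[\vecc]} \cdot \cos\beta = \sqrt{n} \cdot \cos\beta.
\]
On the right-hand side of \cref{eq:inner-product-condition}, I would substitute $\sqrt{k-1} = \sqrt{R^{*}} \cdot \sqrt{n}$, which follows from the definition $R^{*} = (k-1)/n$. Dividing both sides of the resulting inequality by $\sqrt{n}$ gives exactly \cref{eq:anglecondition}, and Theorem~\ref{thm:inner-product-condition} guarantees that any $\vecc$ satisfying this condition appears in the output list (subject to the list size cap).

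There is no real obstacle: the argument is purely algebraic and relies on the single structural fact that subfield subcode words embed into $[0,1]^{pn}$ as concatenations of $n$ unit indicator vectors, fixing $\length{[\vecc]} = \sqrt{n}$. This is also why the corollary is stated for $\vecc \in \cal{C}$ rather than for arbitrary $\vecc \in \cal{R}$, since a general Reed-Solomon codeword may have coordinates in $\F_{q} \setminus \F_{p}$, which under the bracket map become zero vectors and shrink $\length{[\vecc]}$ below $\sqrt{n}$.
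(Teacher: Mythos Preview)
Your proposal is correct and follows essentially the same approach as the paper: both use the identity $\inner{[\vecy],[\vecc]} = \length{[\vecy]}\cdot\length{[\vecc]}\cos\beta$ together with the observation that $\length{[\vecc]} = \sqrt{n}$ for $\vecc \in \F_{p}^{n}$, then divide \cref{eq:inner-product-condition} through by $\sqrt{n}$ using $k-1 = R^{*}n$. Your additional remark explaining why the restriction to $\vecc \in \cal{C}$ (rather than arbitrary $\vecc \in \cal{R}$) is needed is a nice clarification that the paper leaves implicit.
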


\begin{proof}
  Follows from \cref{thm:inner-product-condition} and the identity
  $\inner{[\vecy], [\vecc]} = \length{[\vecy]} \cdot \length{[\vecc]}
  \cos \beta = \length{[\vecy]} \cdot \sqrt{n} \cos \beta$, where
  $\length{[\vecc]} = \sqrt{n}$ because $\vecc \in \F_{p}^{n}$.
\end{proof}

\begin{theorem}
  \label{thm:euclidean-bch-correctness}
  \cref{alg:euclidean-bch-decoder} is correct. More specifically,
  given input $\vecy \in \R_{p}^n$ and $\varepsilon > 0$, it outputs a
  list of exactly those codewords $\vecc \in \cal{C}$ for which
  $\length{\vecy - \vecc}^{2} \leq (1 - \varepsilon)d/2$, in time
  polynomial in~$n$, $\log q$, and $1/\varepsilon$.
\end{theorem}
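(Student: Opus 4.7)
The plan is to prove correctness by splitting into two inclusions. The inclusion ``output $\subseteq \set{\vecc \in \cal{C} : \length{\vecy - \vecc}^2 \le (1-\varepsilon)d/2}$'' is immediate from the explicit distance filter in Step~2 of \cref{alg:euclidean-bch-decoder}. The bulk of the work is the reverse inclusion: showing that every codeword $\vecc \in \cal{C}$ with $\length{\vecy - \vecc}^2 \le (1-\varepsilon)d/2$ actually appears in the list $L = \KV([\vecy], S)$ returned by Step~1. I would prove this by verifying the angle hypothesis of \cref{cor:anglecondition} for the $S$ defined in \cref{eq:list-size-bound}, using \cref{lem:embedding} to transfer the Euclidean distance bound in $\R_p^n$ to a geometric condition on the reliability vectors in $[0,1]^{pn}$.

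Concretely, fix $\vecc \in \cal{C}$ with $\length{\vecy - \vecc}^2 \le (1-\varepsilon)d/2$. Then \cref{lem:embedding} gives $\length{[\vecy] - [\vecc]}^2 \le (1-\varepsilon)d$, and since $\vecc \in \F_p^n$ we have $\length{[\vecc]}^2 = n$, so expanding the squared norm of the difference yields
\[
\inner{[\vecy],[\vecc]} \;\ge\; \tfrac{1}{2}\parens*{\length{[\vecy]}^2 + n - (1-\varepsilon)d}.
\]
Writing $A = \length{[\vecy]}^2$ and noting $n - (1-\varepsilon)d \ge 0$ (since $d \le n$), a single application of AM-GM to $A$ and $n-(1-\varepsilon)d$ gives the clean, $A$-free lower bound
\[
\cos \beta \;=\; \frac{\inner{[\vecy],[\vecc]}}{\length{[\vecy]} \cdot \sqrt{n}} \;\ge\; \frac{A + (n-(1-\varepsilon)d)}{2\sqrt{An}} \;\ge\; \sqrt{1 - (1-\varepsilon)d/n} \;=\; \sqrt{\varepsilon + (1-\varepsilon)R^*},
\]
using $d = (1-R^*)n$. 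Rearranging \cref{eq:list-size-bound} verifies that $S$ is chosen precisely so that $1 - \tfrac{1}{S}\parens*{\tfrac{1}{R^*} + \tfrac{1}{\sqrt{2R^*}}} = \sqrt{R^*/(\varepsilon + (1-\varepsilon)R^*)}$, which makes the threshold in \cref{cor:anglecondition} equal to $\sqrt{\varepsilon + (1-\varepsilon)R^*}$. Hence $\vecc \in L$ by \cref{cor:anglecondition}, and Step~2 retains it in the output.

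For the running time, \cref{thm:inner-product-condition} bounds Step~1 by $\poly(n, \log q, S)$, and Step~2 is a trivial $O(Sn)$-time distance check, so it suffices to show $S = \poly(n, 1/\varepsilon)$. The numerator of \cref{eq:list-size-bound} is at most $2/R^* \le 2n$ (using $k \ge 2$, so $R^* \ge 1/n$). For the denominator I would invoke the elementary inequality $1 - \sqrt{x} \ge (1-x)/2$ for $x \in [0,1]$ together with $\varepsilon + (1-\varepsilon)R^* \le 1$ and $1 - R^* \ge 1/n$, giving $1 - \sqrt{R^*/(\varepsilon + (1-\varepsilon)R^*)} \ge \varepsilon(1-R^*)/2 \ge \varepsilon/(2n)$. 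Thus $S = O(n^2/\varepsilon)$. The one mildly subtle step is the AM-GM bound on $\cos\beta$: a more direct analysis would minimize $(A + n - (1-\varepsilon)d)/(2\sqrt{An})$ over $A \in [n/2, n]$, which requires case analysis on whether the critical value $A = n - (1-\varepsilon)d$ lies in this interval, whereas AM-GM collapses this to a single inequality valid for all $A > 0$ and matching the threshold implicit in the definition of $S$ exactly.
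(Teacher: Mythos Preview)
Your proof is correct and follows the same overall architecture as the paper's: the forward inclusion is immediate from the filter in Step~2, and the reverse inclusion combines \cref{lem:embedding} with \cref{cor:anglecondition} via the bound $\cos\beta \ge \sqrt{\varepsilon + (1-\varepsilon)R^{*}}$, after which the choice of~$S$ in \cref{eq:list-size-bound} makes the threshold match exactly. The one substantive difference is in how that cosine bound is obtained. The paper argues geometrically: since $\length{[\vecc]}^{2}=n$, the squared distance from $[\vecc]$ to the line through the origin and $[\vecy]$ is $n\sin^{2}\beta$, and this is at most $\length{[\vecy]-[\vecc]}^{2}$ because $[\vecy]$ lies on that line; hence $\sin^{2}\beta \le (1-\varepsilon)(1-R^{*})$ and one inverts to $\cos\beta$. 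You instead expand $\length{[\vecy]-[\vecc]}^{2}$ to lower-bound $\inner{[\vecy],[\vecc]}$ and then eliminate the unknown $A=\length{[\vecy]}^{2}$ via AM--GM on $A$ and $n-(1-\varepsilon)d$. Both are short elementary arguments yielding the identical bound; the paper's avoids introducing~$A$ at all, while yours is purely algebraic and makes explicit (as you note) that no case analysis on the range of~$A$ is needed. Your running-time analysis is also more detailed than the paper's, which simply asserts that~$S$ is polynomial in~$n$ and~$1/\varepsilon$; your bound $S=O(n^{2}/\varepsilon)$ is correct.
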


\begin{proof}
  The running time follows from the fact that~$S$ is polynomial in~$n$
  and~$1/\varepsilon$ by \cref{eq:list-size-bound}, and by
  \cref{thm:inner-product-condition}.

  We now prove correctness. By the final step, the algorithm outputs only
  codewords $\vecc \in \cal{C}$ for which
  $\length{\vecy - \vecc}^{2} \leq (1-\varepsilon)d/2$. Now
  letting~$\vecc$ be any such a codeword, we show that it appears in
  $\KV([\vecy],S)$, and hence in the output of the overall
  algorithm. From \cref{lem:embedding} and the fact that
  $d = (1-R^{*})n$, we have
  \begin{equation}
    \label{eq:relidist}
    \length{[\vecy] - [\vecc]}^2 \leq 2\length{\vecy - \vecc}^{2}
    \leq (1 - \varepsilon)(1 - R^*)n .
  \end{equation}
  Let $\beta$ be the angle between $[\vecy]$ and $[\vecc]$, which has
  non-negative cosine because both vectors have only non-negative
  entries. Since $\length{[\vecc]}^2 = n$ (because
  $\vecc \in \F_{p}^{n}$), the squared distance from~$[\vecc]$ to the
  line passing through the origin and~$[\vecy]$ is $n \sin^{2} \beta$,
  so
  \begin{equation}
    \label{eq:sinbound}
    n \sin^2 \beta \leq \length{[\vecy] - [\vecc]}^2
  \end{equation}
  and hence $\sin^{2} \beta \leq (1-\varepsilon)(1-R^{*})$. Therefore,
  \begin{equation}
    \label{eq:cosbound}
    \cos \beta = \sqrt{1 - \sin^2 \beta}
    \geq \sqrt{\varepsilon + (1 - \varepsilon)R^*}.
  \end{equation}
  Finally, based on our choice of $S$, a straightforward algebraic
  manipulation yields
  \begin{equation}
    \sqrt{\varepsilon + (1 - \varepsilon)R^*}
    = \frac{\sqrt{R^*}}{1 - \frac{1}{S}\left(\frac{1}{R^*}
        + \frac{1}{\sqrt{2R^*}}\right)} ,
  \end{equation}
  so $\cos \beta$ satisfies \cref{eq:anglecondition}, and invoking
  \cref{cor:anglecondition} completes the proof.
\end{proof}

We conclude this subsection by noting that (list) decoding a linear
code $\cal{C} \subseteq \F_{p}^{n}$ with error measured in the
Euclidean norm is syntactically very similar, but not quite identical,
to (list) decoding the Construction~A lattice
$\Lambda = \overline{\cal{C}} + p\Z^{n} \subseteq \R^{n}$.  Indeed,
when~$p$ exceeds twice the decoding distance, these tasks are
equivalent (and hence \cref{alg:euclidean-bch-decoder} can efficiently
solve the latter) because by the triangle inequality, any given coset
of~$p\Z^{n}$ can have at most one lattice point that is within the
decoding distance of a given received word.  However, when~$p$ is
significantly smaller than~$d$ (as it is in
\cref{sec:near-minkowski}), there may be a huge number of different
lattice vectors in a given coset of~$p\Z^{n}$ that are within the
decoding distance of the received word. So even if it is possible to
decode the code efficiently, it may be (combinatorially) infeasible to
decode its Construction~A lattice.

\subsection{Optimality of Our Reliability Vector}
\label{sec:optimal-reliability}

Here we show that our choice of reliability
vector~$[\vecy] \in [0,1]^{pn}$ as a function of the received
word~$\vecy \in \R_{p}^{n}$ (as used in
\cref{alg:euclidean-bch-decoder}) yields, for the Koetter-Vardy
polynomial-time soft-decision decoder, an essentially optimal tradeoff
between the squared Euclidean decoding distance (up to $n/4$) and the
adjusted rate~$R^{*}$ of the Reed-Solomon code. (The material in this
section is not needed for anything else in the paper.)

As shown in~\cite[Section~IV]{KoetterV01:_optimal_weight} (see
Equation~(12)), the Koetter-Vardy algorithm is guaranteed to decode a
received word $\vecy \in \R_{p}^{n}$ to some desired distance if it is
given a reliability vector $W \in [0,1]^{pn}$ such that
\begin{equation}
  \label{eq:KV-optimal}
  \min_{\vecc} \frac{\inner{[\vecc], W}} {\length{W}} > \sqrt{k-1} =
  \sqrt{nR^{*}} ,
\end{equation}
where the minimum is taken over all $\vecc \in \Z_{p}^{n}$ within the
desired distance of~$\vecy$. So, for provable decoding with this
algorithm, the code's adjusted rate~$R^{*}$ is bounded by the minimum
(over the choice of~$\vecy$) of the maximum (over the choice of~$W$)
left-hand side of \cref{eq:KV-optimal}.

In what follows we will show that for any squared Euclidean decoding
distance~$\delta n$ with $\delta \leq 1/4$, there is a particular
received word $\vecy \in \R_{p}^{n}$ for which our choice of
reliability vector~$W = [\vecy]$ maximizes the left-hand side of
\cref{eq:KV-optimal}, and bounds the adjusted rate by
$R^{*} < 1-2\delta$. Therefore, according to the best available
analysis, the Koetter-Vardy algorithm is limited to decoding to within
squared distance $\delta n < n(1-R^{*})/2 = d/2$. Because our
reliability vector allows for decoding to squared distance
$(1-\varepsilon)d/2$ for any positive constant (or even inverse
polynomial)~$\varepsilon$, it is therefore an essentially optimal
choice.

\paragraph{The analysis.}

Fix some $\delta \in (0,1/4]$ and let $\beta \in (0,1/2]$ be the
unique solution to $\beta(1-\beta) = \delta$. Then let
$\vecy = (\beta, \ldots, \beta) \bmod p$ be the received word
in~$\R_{p}^{n}$ where each entry is $\beta \bmod p$. Also let
$\Delta \in \R^{p}$ be defined by
$\Delta_{\alpha} = \abs{\alpha - \beta}^{2}$ for each
$\alpha \in \Z_{p}$, i.e., the squared Euclidean distance
between~$\alpha$ and~$\beta$ (modulo~$p$).  We proceed by showing
that, for this received word~$\vecy$, taking $W=[\vecy] \in [0,1]^{pn}$
maximizes the left-hand side of \cref{eq:KV-optimal} and makes it
equal $\sqrt{n(1-2\delta)}$, hence $R^{*} < 1-2\delta$.

Because $\vecy$ is the all-$\beta$s vector, to maximize the left-hand
size of \cref{eq:KV-optimal}, without loss of generality we can
take~$W$ to be made up of~$n$ identical blocks $w \in [0,1]^{p}$. For any
word $\vecc \in \Z_{p}^{n}$, the distance $\length{\vecy - \vecc}$ is
entirely determined by the frequencies with which the various
$\alpha \in \Z_{p}$ appear in~$\vecc$. More specifically, define a
vector $T \in [0,1]^{p}$ by
$T_{\alpha} = \frac{1}{n}\sum_{i = 1}^{n} [\vecc_{i} = \alpha]$, the
fraction of entries that are~$\alpha$ in~$\vecc$. Then we have
$\frac{1}{n}\length{\vecy - \vecc}^{2} = \inner{T, \Delta}$. Next,
define the set
\begin{equation}
  \label{eq:error-events}
  B(\delta) = \set{T \in [0,1]^{p} :
    \inner{T, \Delta} \leq \delta,
    \sum_{\alpha} T_{\alpha} = 1,
    T_{\alpha} \geq 0\; \forall \alpha}.
\end{equation}
In~\cite{KoetterV01:_optimal_weight} it is shown (in a more general
form) that taking any $w \in \arg \min_{T \in B(\delta)} \inner{T,T}$
maximizes the left-hand side of \cref{eq:KV-optimal}.

\begin{lemma}
  \label{lem:kkt-conditions}
  For any $ \delta \in (0,1/4]$ and the unique $\beta \in (0,1/2]$
  satisfying $\beta(1-\beta) = \delta$, the reliability vector
  $[\beta] \in [0,1]^{p}$ as defined in \cref{sec:reliability-vectors} is
  the unique element of $\arg \min_{T \in B(\delta)} \inner{T, T}$.
\end{lemma}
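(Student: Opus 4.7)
The problem minimizes the strictly convex function $T \mapsto \inner{T,T}$ over the compact convex polytope $B(\delta)$, so a unique minimizer exists; my plan is to verify that $[\beta]$ satisfies the KKT conditions, which for this convex quadratic program are both necessary and sufficient for optimality. Primal feasibility of $[\beta]$ is immediate: its entries sum to $1$, and a direct computation gives $\inner{[\beta], \Delta} = (1-\beta)\beta^{2} + \beta(1-\beta)^{2} = \beta(1-\beta) = \delta$, so the distance constraint is active.

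Next I would introduce Lagrange multipliers $\lambda \geq 0$ on the distance constraint, $\mu \in \R$ on the simplex equality, and $\nu_{\alpha} \geq 0$ on each nonnegativity constraint, with complementary slackness $\nu_{\alpha} T_{\alpha} = 0$. The stationarity condition reads $2T_{\alpha} + \mu + \lambda\Delta_{\alpha} - \nu_{\alpha} = 0$. Since $[\beta]_{0} = 1-\beta$ and $[\beta]_{1} = \beta$ are strictly positive, $\nu_{0} = \nu_{1} = 0$, and solving the two resulting stationarity equations yields $\lambda = 2$ and $\mu = -2(1-\delta)$ whenever $\beta \neq 1/2$. In the degenerate case $\beta = 1/2$ the two equations coincide, leaving a one-parameter family of multipliers; I would simply fix $\lambda$ large enough for the dual-feasibility check below.

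The crux of the argument is dual feasibility at the remaining coordinates $\alpha \in \{2, \ldots, p-1\}$ (vacuous when $p = 2$): there $[\beta]_{\alpha} = 0$, complementary slackness is automatic, and stationarity forces $\nu_{\alpha} = 2(\Delta_{\alpha} - (1-\delta))$, which must be nonnegative. This reduces to the purely geometric inequality $\Delta_{\alpha} \geq 1 - \delta$, which I would establish via the sharper bound $\Delta_{\alpha} \geq 1$. Concretely, for integer $\alpha \in \{2, \ldots, p-1\}$ and $\beta \in (0, 1/2]$, the $\R_{p}$-distance between $\alpha$ and $\beta$ is $\min(\alpha - \beta,\, p - \alpha + \beta)$, and both candidate representatives are at least $1$ (the infimum $1+\beta$ being attained at $\alpha = p-1$). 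Since $\delta \leq 1/4$ implies $1 - \delta \leq 1$, we conclude $\nu_{\alpha} \geq 0$. This ``wraparound'' step, carefully separating the two representatives of $\alpha - \beta$ modulo $p$, is the one place where I expect real friction; once it is settled, strict convexity of the objective promotes the KKT point to \emph{the} unique minimizer.
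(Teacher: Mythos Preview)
Your proposal is correct and follows essentially the same KKT argument as the paper: verify primal feasibility of $[\beta]$ (with the distance constraint active), solve the two stationarity equations at $\alpha\in\{0,1\}$ to obtain the multiplier $\lambda=2$ on the distance constraint and $\mu=-2(1-\delta)=-2(\beta^{2}-\beta+1)$ on the simplex equality, and then check $\nu_{\alpha}\geq 0$ for $\alpha\notin\{0,1\}$ via the bound $\Delta_{\alpha}\geq 1$. The only cosmetic difference is that you single out the degenerate case $\beta=1/2$, whereas the paper's uniform choice $\lambda=2$ already works there as one member of the one-parameter family you identified.
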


\begin{proof}
  Because we are dealing with a convex optimization problem with
  strictly convex objective function $\inner{T,T}$, any local
  minimizer is the unique global minimizer, so it suffices to show
  that~$[\beta]$ is the former. Additionally, because the constraints
  defining $B(\delta)$ are affine, a vector $T \in B(\delta)$ is a
  local minimizer if and only if it satisfies the Karush-Kuhn-Tucker
  conditions
  \begin{align}
    \forall \alpha,\; 2 T_{\alpha} + \mu \Delta_{\alpha}
    - \mu_{\alpha} + \lambda &= 0, \label{eq:kkt-stationarity} \\
    \mu(\inner{T, \Delta} - \delta) &= 0, \label{eq:kkt-comp-slack}
    \\
    \forall \alpha,\; \mu_{\alpha} T_{\alpha} &= 0. \label{eq:kkt-comp-slack-alpha}
  \end{align}
  for some real constants $\mu, \mu_{\alpha} \geq 0$ and
  (unrestricted)~$\lambda$.

  We begin by showing that $[\beta] \in B(\delta)$. Recall from
  \cref{sec:reliability-vectors} that $[\beta]$ has two nonzero
  entries, $[\beta]_{0} = 1 - \beta$ and $[\beta]_{1} = \beta$, so it
  is clear that $[\beta]$ is non-negative with entries that sum to
  one, and it remains to check that
  $\inner{[\beta], \Delta} \leq \delta$. Because $\beta \in (0, 1/2]$,
  we have $\Delta_{0} = \beta^{2}$ and $\Delta_{1} = (1 - \beta)^{2}$,
  and thus
  \[\inner{[\beta], \Delta} = (1 - \beta) \cdot \beta^{2} +
    \beta \cdot (1 - \beta)^{2} = \beta(1 - \beta) = \delta. \]
  Moreover, this shows that \cref{eq:kkt-comp-slack} is satisfied for
  $T=[\beta]$ and any~$\mu \geq 0$.

  Finally, we find $\mu, \mu_{\alpha} \geq 0$ and $\lambda \in \R$ for
  which $T=[\beta]$ satisfies
  \cref{eq:kkt-stationarity,eq:kkt-comp-slack-alpha}.
  Because~$[\beta]_{0}$ and~$[\beta]_{1}$ are the only nonzero entries
  of~$[\beta]$, \cref{eq:kkt-comp-slack-alpha} says that we must take
  $\mu_{0} = \mu_{1} = 0$, but $\mu_{\alpha} \geq 0$ is unrestricted
  for $\alpha \not\in \bit$. We solve the system given by
  \cref{eq:kkt-stationarity} to get
  $\mu = 2 \geq 0$, $\lambda = -2(\beta^{2} - \beta + 1)$, and for all
  $\alpha \not\in \set{0, 1}$,
  \[ \mu_{\alpha} = \mu\Delta_{\alpha} + \lambda = 2(\Delta_{\alpha} -
    (\beta^{2} - \beta + 1)), \] which is non-negative because
  $\Delta_{\alpha} = \abs{\alpha-\beta}^{2} \geq 1$ and
  $\beta^{2} - \beta \leq 0$ for $\beta \in [0, 1/2]$.
\end{proof}


\section{Decoding Construction D Lattices}
\label{sec:construction-d}

In this section we give a list-decoding algorithm for Construction~D
lattices. First we recall the definition of Construction~D.  The
definition we use here is very similar to the one
from~\cite[Section~8.8.1]{conwaysloane99:_splag}, with the only
differences being that we scale so that the lattice is integral (see
\cref{rem:construction-d-rescale}), and we generalize in the obvious
way to codes over~$\F_{p}$ for any prime~$p$.

\begin{definition}[Construction D]
  \label{def:construction-d}
  Let
  $\F_p^n = \cal{C}_0 \supseteq \cal{C}_1 \supseteq \cdots \supseteq
  \cal{C}_\ell$ be a tower of length-$n$ linear codes
  where~$\cal{C}_i$ has dimension~$k_i$ for $i = 0, \ldots, \ell$.
  Choose a basis $\vecb_1, \dots, \vecb_n$ of~$\F_p^n$ such that
  \begin{enumerate*}[label=(\roman*)]
  \item $\vecb_1, \dots, \vecb_{k_i}$ form a basis of~$\cal{C}_i$ for
    $i = 0, \ldots, \ell$, and
  \item some permutation of the (row) vectors
    $\vecb_{1}, \ldots, \vecb_{n}$ forms a upper-triangular
    matrix.\footnote{Without loss of generality, such a basis can be
      obtained by starting with an arbitrary basis of
      $\cal{C}_{\ell}$, extending it to a basis of
      $\cal{C}_{\ell - 1}$, then extending that basis to one of
      $\cal{C}_{\ell - 2}$, and so on
      through~$\cal{C}_{0} = \F_{p}^{n}$. Finally, perform Gaussian
      elimination on the resulting basis of~$\F_{p}^{n}$ so that some
      permutation of the vectors forms an upper-triangular matrix.
      See~\cite[Section~4]{DBLP:journals/toc/Micciancio12} for full
      details.}
  \end{enumerate*}
  Define a set of distinguished $\Z^{n}$-representatives
  for~$\cal{C}_{i}$ as follows: for any $\vecc \in \cal{C}_{i}$, write
  it uniquely as $\vecc = \sum_{j=1}^{k_{i}} a_{j} \vecb_{j}$ for some
  $a_{j} \in \F_{p}$, and define its representative
  $\tilde{\vecc} := \sum_{j=1}^{k_{i}} \overline{a}_{j}
  \overline{\vecb}_{j} \in \Z^{n}$.

  Define $\Lambda_{0} = \Z^{n}$, and for each $i=1,\ldots,\ell$ define
  the integer lattice
  \begin{equation}
    \label{eq:D-Lambda_i}
    \Lambda_{i} := \tilde{\cal{C}}_{i} + p\Lambda_{i - 1}.
  \end{equation}
  The Construction~D lattice for the full tower $\set{\cal{C}_{i}}$
  is~$\Lambda = \Lambda_{\ell}$.
\end{definition}

In \cref{thm:construction-d-props} below (see also
\cref{rem:construction-d-props}) we recall various important
properties of Construction D lattices, which we use there to obtain
our main results.

\begin{remark}
  \label{rem:construction-d-decomp}
  Observe that any vector $\vecv_{i} \in \Lambda_{i}$ can be written
  uniquely as $\vecv_{i} = \tilde{\vecc} + p \vecv_{i-1}$ for some
  $\vecc \in \cal{C}_{i}$ and $\vecv_{i-1} \in \Lambda_{i-1}$. This is
  simply because if
  $\tilde{\vecc} + p \vecv_{i-1} = \tilde{\vecc}' + p \vecv'_{i-1}$
  for some $\vecc, \vecc' \in \cal{C}_{i}$ and
  $\vecv_{i-1}, \vecv'_{i-1} \in \Lambda_{i-1}$, then by reducing
  modulo $p\Z^{n}$, we have $\vecc = \vecc'$ and hence
  $\vecv_{i-1} = \vecv'_{i-1}$ as well.
\end{remark}

\begin{remark}
  \label{rem:construction-d-representative}
  Because the set of representatives~$\tilde{\cal{C}}_{i}$ depends on
  the choice of basis, so do the above lattices~$\Lambda_{i}$. For
  $\vecc = \sum_{j=1}^{k_{j}} a_{j}\vecb_{j} \in \cal{C}_{i}$,
  $\overline{\vecc}$ may differ from $\tilde{\vecc}$ because the
  addition in $\Z^{n}$ does not ``wrap around'' as it does in
  $\F_{p}^{n}$, and their difference may not be a (scaled) lattice
  vector in $p \Lambda_{i-1}$.

  As a concrete example, let $\cal{C}_{0} = \F_{3}^{2}$ and let
  $\cal{C}_{1} = \cal{C}_{2}$ be the code generated by the vector
  $\vecb_{1} = (1, 2) \in \F_{3}^{2}$.\footnote{One can also construct
    a similar example over~$\F_{2}$, but in higher dimension.} Now let
  $\Lambda_{0}, \Lambda_{1}, \Lambda_{2}$ be the lattices obtained via
  Construction D for this tower using basis
  $\vecb_{1}, \vecb_{2} = (0, 1) \in \F_{3}^{2}$, and let
  $\Lambda'_{0}, \Lambda'_{1}, \Lambda'_{2}$ be obtained instead using
  the basis $\vecb'_{1} = (2,1), \vecb_{2} \in \F_{3}^{2}$. Then we
  have $\overline{\vecb}'_{1} \in \Lambda'_{2} \subseteq \Z^{2}$ by
  construction (\cref{eq:D-Lambda_i}). However,
  $\overline{\vecb}'_{1} = (2,1) \not\in \Lambda_{2}$ because
  $\tilde{\vecb}'_{1} = 2 \cdot \overline{\vecb}_{1} = 2 \cdot (1,2) =
  (2,4) \in \Lambda_{2}$ (where we have used the
  representative~$\overline{\vecb}_{1} \in \Z^{2}$ to define
  $\tilde{\vecb}'_{1}$, as required by the construction
  of~$\Lambda_{2}$), but
  $\tilde{\vecb}'_{1} - \overline{\vecb}'_{1} = (0, 3) \not\in
  3\Lambda_{1}$ because $(0, 1) \not\in \cal{C}_{1}$.
\end{remark}

\begin{remark}
  \label{rem:construction-d-rescale}
  Let $\Lambda = \Lambda_{\ell}$ be the lattice obtained via
  \cref{def:construction-d} and let $\Lambda'$ be the lattice obtained
  via~\cite[Section~8.8.1]{conwaysloane99:_splag}, for the same tower
  of (binary) codes. Then $\Lambda = 2^{\ell - 1} \Lambda'$. To see
  this, unwind \cref{def:construction-d} to see that~$\Lambda$
  consists of all vectors of the form
  \[ \vecz + \sum_{i = 1}^{\ell}\sum_{j=1}^{k_{i}} 2^{\ell - i}
    \overline{a}_{j}^{(i)} \overline{\vecb}_{j} \] where
  $\vecz \in 2^\ell\Z^n$ and $a_{j}^{(i)} \in \F_{2}$, which is
  clearly equivalent to $2^{\ell-1}\Lambda'$.
\end{remark}

\begin{remark}
  \label{rem:construction-d-basis}
  Using any basis $\vecb_{1}, \ldots, \vecb_{n}$ of~$\F_{p}^{n}$
  meeting the conditions from \cref{def:construction-d}, we can
  efficiently construct a basis consisting of relatively short vectors
  for the associated Construction D lattice~$\Lambda$. Defining the
  representatives~$\overline{\vecb}_{j}$ to have small entries (e.g.,
  in $\set{0, \ldots, p-1}$), there is a basis of~$\Lambda$ consisting
  of vectors $p^{i_{j}} \overline{\vecb}_{j}$ for various
  $i_{j} \in \set{0, \ldots, \ell}$. (See, e.g., the proof
  of~\cite[Theorem~4.2]{DBLP:journals/toc/Micciancio12}.)  Therefore,
  the vectors in this basis have Euclidean norm at most
  $(p-1) p^{\ell} \sqrt{n}$. For suitable towers of codes~$\cal{C}_{i}$,
  this bound is not much more than the minimum distance of~$\Lambda$;
  see \cref{rem:construction-d-short-basis}.
\end{remark}


\newcommand{\eucBchDec}{\cal{D}}
\newcommand{\cstrDDec}{\cal{L}}

The recursive form of \cref{def:construction-d} naturally leads to a
recursive (list) decoder, given in \cref{alg:d} below, which relies on
(list-)decoding subroutines that find
\emph{all}~$\vecc_{i} \in \cal{C}_{i}$ that are sufficiently close to
an (appropriately updated) received word at each stage.
More precisely, for each $i = 0, \ldots, \ell$ let $\eucBchDec_i$ be a
(list) decoder for the code~$\cal{C}_i$ to
distance~$e_{i} := p^{i} e_0$ (for some $e_{0} > 0$) in some
desired~$\ell_{q}$ norm~$\length{\cdot}$, e.g., the Euclidean norm.

\begin{algorithm}[h]
  \caption{List-decoding algorithm~$\cstrDDec(\vecy, i)$ for the
    lattices~$\Lambda_{i}$}
  \label{alg:d}
  \begin{algorithmic}
    \REQUIRE Received word $\vecy \in \R^n$ and integer $i \in \set{0,
    \ldots, \ell}$.

    \ENSURE A list of the lattice vectors
    $\vecv \in \Lambda_i$ for which
    $\length{\vecv - \vecy} \leq e_{i}$.

    \begin{enumerate}[itemsep=0pt]
    \item Let $L = \eucBchDec_i(\vecw) \subseteq \cal{C}_{i}$ where
      $\vecw = \vecy \bmod p\Z^n$.
    \item For each $\vecc \in L$:
      \begin{enumerate}
      \item If $i=0$, let
        $R_{\vecc} = \set{\tilde{\vecc} + p \round{(\vecy - \tilde{\vecc})/p}}
        \subseteq \Z^{n}$.

        (Alternatively, let $R_{\vecc}$ be the list of all elements of
        $\tilde{\vecc} + p\Z^{n}$ that are sufficiently close
        to~$\vecy$.)
      \item Otherwise, let
        $R_{\vecc} = \set{\tilde{\vecc} + p\vecv : \vecv \in \cstrDDec((\vecy -
          \tilde{\vecc})/p, i - 1)} \subseteq \Z^{n}$.
      \end{enumerate}

    \item Output $\bigcup_{\vecc \in L} R_{\vecc}$.
    \end{enumerate}
  \end{algorithmic}
\end{algorithm}

\begin{theorem}
  \label{thm:alg-d-correctness}
  For any $e_{0} < p/2$, \cref{alg:d} is correct: given any
  $\vecy \in \R^{n}$ and $i \in \set{0, \ldots, \ell}$, it outputs a
  list of exactly those $\vecv \in \Lambda_i$ for which
  $\length{\vecy - \vecv} \leq e_{i} = p^{i} e_{0}$.
\end{theorem}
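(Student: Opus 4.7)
The plan is to proceed by induction on $i$. The base case $i=0$ uses the constraint $e_0 < p/2$ in a crucial way: here $\Lambda_0 = \Z^n$ and $\cal{C}_0 = \F_p^n$, and I would show that for each $\vecc \in L$, the rounding step correctly identifies the (at most one) element of the coset $\tilde{\vecc} + p\Z^n$ within distance $e_0$ of $\vecy$. Specifically, if $\vecv \in \Z^n$ satisfies $\length{\vecy - \vecv} \leq e_0$ and $\vecv = \tilde{\vecc} + p\vecz$ for some $\vecz \in \Z^n$, then $(\vecy - \tilde{\vecc})/p = \vecz + (\vecy - \vecv)/p$; since $\length{(\vecy - \vecv)/p} \leq e_0/p < 1/2$, every coordinate has absolute value less than~$1/2$, so componentwise rounding recovers $\vecz$ exactly. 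In the converse direction, the same bound shows that two distinct elements of a single coset of $p\Z^n$ cannot both lie within distance $e_0$ of $\vecy$, so completeness is immediate once we know that the corresponding $\vecc = \vecv \bmod p\Z^n$ is returned by $\eucBchDec_0(\vecw)$, which follows from $\length{\vecc - \vecw} \leq \length{\vecv - \vecy} \leq e_0$ measured in $\R_p^n$.

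For the inductive step, the key structural input is \cref{rem:construction-d-decomp}: every $\vecv \in \Lambda_i$ decomposes uniquely as $\vecv = \tilde{\vecc} + p\vecv'$ with $\vecc \in \cal{C}_i$ and $\vecv' \in \Lambda_{i-1}$. Given $\vecv \in \Lambda_i$ with $\length{\vecv - \vecy} \leq e_i$, I would argue that (i) the corresponding $\vecc = \vecv \bmod p\Z^n$ is recovered by $\eucBchDec_i(\vecw)$, because $\length{\vecc - \vecw}$ in $\R_p^n$ is bounded by $\length{\vecv - \vecy} \leq e_i$; and (ii) the corresponding $\vecv' \in \Lambda_{i-1}$ satisfies $\length{\vecv' - (\vecy - \tilde{\vecc})/p} = \length{\vecv - \vecy}/p \leq e_i/p = e_{i-1}$, so by the inductive hypothesis $\vecv'$ appears in the recursive call's output and hence $\vecv \in R_\vecc$. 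Conversely, by the inductive hypothesis every element $\vecv'$ returned by the recursive call lies within distance $e_{i-1}$ of $(\vecy - \tilde{\vecc})/p$, so multiplying by~$p$ and adding $\tilde{\vecc}$ yields a lattice vector in $\Lambda_i$ within distance $e_i$ of $\vecy$. Both completeness and soundness then follow.

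I expect the main subtlety to lie in soundness at the base case: the rounding step produces exactly one element of $R_\vecc$ per $\vecc \in L$, but not every $\vecc$ returned by $\eucBchDec_0$ need correspond to a genuine lattice vector within distance $e_0$ of $\vecy$. I would resolve this either by adopting the parenthetical variant of the algorithm in which $R_\vecc$ is taken to be the (at most singleton) list of elements of $\tilde{\vecc} + p\Z^n$ within distance $e_0$ of $\vecy$, or equivalently by appending a final filter discarding outputs with $\length{\vecv - \vecy} > e_i$. Everything else reduces to routine bookkeeping around the unique coset decomposition from \cref{rem:construction-d-decomp} and the clean scaling of the norm by~$p$ at each level of the recursion.
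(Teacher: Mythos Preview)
Your proposal is correct and matches the paper's inductive proof essentially step for step. One clarification: your base-case soundness worry is unfounded, because the norm on $\R_p^n$ is \emph{defined} as the minimum over lifts to $\R^n$, so $\vecc \in L = \eucBchDec_0(\vecw)$ already guarantees the existence of some $\vecv \in \tilde{\vecc} + p\Z^n$ with $\length{\vecy - \vecv} \leq e_0$; since $e_0 < p/2$ forces every coordinate of $(\vecy - \vecv)/p$ to have magnitude below $1/2$, this $\vecv$ is unique and recovered exactly by the rounding step, so no filtering or parenthetical variant is needed.
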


\begin{proof}
  We proceed by induction on~$i$. Starting with the base case $i = 0$,
  by assumption, $D_{0}(\vecw)$ outputs a list~$L$ of exactly those
  $\vecc \in \F_{p}^{n} = \Z_{p}^{n}$ for which
  $\length{\vecw - \vecc} \leq e_{0}$. For each $\vecc \in L$, there
  is a unique element $\vecv \in \tilde{\vecc} + p \Z^{n}$ for which
  $\length{\vecy-\vecv} \leq e_{0}$, because $e_{0} < p/2$ and by the
  triangle inequality. Indeed,
  $\vecv = \tilde{\vecc} + p \round{(\vecy-\tilde{\vecc})/p}$ is that
  unique element, because it minimizes the magnitude of each
  coordinate of $\vecy-\vecv$. Therefore, $\cstrDDec(\vecy,0)$ outputs a
  list of exactly those $\vecv \in \Lambda_{0}$ for which
  $\length{\vecy-\vecv} \leq e_{0}$, as claimed.

  Next, assume by induction that the algorithm correctly list
  decodes~$\Lambda_{i-1}$ to distance~$e_{i-1}$ for some $i \geq
  1$. First, it is clear that $\cstrDDec(\vecy, i)$ outputs \emph{only}
  vectors within distance~$e_{i}$ of~$\vecy$: since
  $\cstrDDec((\vecy-\tilde{\vecc})/p, i-1)$ outputs only vectors~$\vecv$
  for which $\length{(\vecy-\tilde{\vecc})/p - \vecv} \leq e_{i-1}$ by
  assumption, we have
  $\length{\vecy - (\tilde{\vecc} + p \vecv)} \leq p e_{i-1} = e_{i}$
  for all the vectors $\tilde{\vecc} + p \vecv \in R_{\vecc}$, as
  needed.

  Finally, we show that $\cstrDDec(\vecy, i)$ outputs a list containing
  \emph{all} $\vecv \in \Lambda_{i}$ for which
  $\length{\vecy - \vecv} \leq e_{i}$. Let~$\vecv$ be such a
  vector. By \cref{rem:construction-d-decomp} we can uniquely write
  $\vecv = \tilde{\vecc} + p \vecv_{i-1}$ for some
  $\vecc \in \cal{C}_{i}$ and $\vecv_{i-1} \in \Lambda_{i-1}$. Then,
  because $\length{\vecy - \vecv} \leq e_{i}$, we also have
  $\length{\vecw - \vecc} \leq e_{i}$, so $\vecc$ appears in the list
  output by $\eucBchDec_{i}(\vecw)$. Finally, we have that
  \begin{equation}
    \length*{(\vecy - \tilde{\vecc})/p - \vecv_{i-1}} = \length{(\vecy -
      (\tilde{\vecc} + p \vecv_{i-1}))/p} \leq e_{i}/p = e_{i-1},
  \end{equation}
  so $\vecv_{i-1}$ appears in the list output by
  $\cstrDDec((\vecy - \tilde{\vecc})/p, i-1)$. Thus, $\vecv$ appears in
  the list output by $\cstrDDec(\vecy, i)$, as needed.
\end{proof}

\begin{theorem}
  \label{thm:alg-d-runtime}
  Let~$S_{i}, T_{i} \geq 1$ be upper bounds on the output list size
  and running time, respectively, of the decoder $\eucBchDec_i$, and let
  $S=\max_{i} S_{i}$ and $T = \max_{i} T_{i}$. Then the running time
  $R(i)$ of $\cstrDDec(\cdot, i)$ satisfies
  \begin{equation}
    \label{eq:alg-d-runtime-bound}
    R(i) \leq
    (i+1) (T+K) \prod_{j=1}^{i} S_{j} \leq (i+1) (T+K) \cdot S^{i}
  \end{equation}
  for some $K = \poly(S,n,\log p)$.
\end{theorem}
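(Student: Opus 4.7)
The plan is a straightforward induction on $i$, mirroring the recursive structure of \cref{alg:d}. The core observation is that at each level of the recursion the algorithm makes one decoder call (cost $\le T$), produces a list of size at most $S_i$, and for each list element does only $\poly(n,\log p)$ bookkeeping work (computing $\tilde{\vecc}$, forming $(\vecy - \tilde{\vecc})/p$, and either rounding or recursing).

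For the base case $i = 0$, the algorithm invokes $\eucBchDec_0$ once at cost at most $T_0 \le T$, obtaining a list of size at most $S_0$, and for each element computes the representative $\tilde{\vecc}$ and the rounded lattice point $\tilde{\vecc} + p \round{(\vecy - \tilde{\vecc})/p}$ in $\poly(n,\log p)$ time. Fixing $K = \poly(S,n,\log p)$ large enough to absorb all such per-element overhead across all levels, one gets $R(0) \le T + K$, which matches \cref{eq:alg-d-runtime-bound} with the empty product equal to~$1$.

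For the inductive step, assume the bound holds for $i-1$. Then $\cstrDDec(\vecy,i)$ calls $\eucBchDec_i$ (cost $\le T$) to obtain $L$ of size at most $S_i$, and for each $\vecc \in L$ incurs at most $K$ preprocessing cost plus one recursive call. Using the inductive hypothesis,
\begin{equation*}
R(i) \;\le\; T + S_i\bigl(K + R(i-1)\bigr) \;\le\; T + S_i K + i(T+K)\prod_{j=1}^{i} S_j.
\end{equation*}
The desired bound $(i+1)(T+K)\prod_{j=1}^{i} S_j$ then follows from the elementary inequality $T + S_i K \le (T+K)\prod_{j=1}^{i} S_j$, which holds because each $S_j \ge 1$ forces $\prod_{j=1}^{i} S_j \ge \max(1, S_i)$. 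The second inequality in \cref{eq:alg-d-runtime-bound} is immediate from $S_j \le S$.

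The only substantive concern beyond arithmetic bookkeeping is verifying that a single polynomial $K = \poly(S,n,\log p)$ really does uniformly bound the per-element overhead at all $\ell$ levels of recursion. The potential worry is that the intermediate received words $(\vecy - \tilde{\vecc})/p$ accumulate bit complexity down the recursion; this is handled by working with $\poly(n,\log p)$-precision approximations throughout, which is enough since the target decoding radii $e_i$ only require bounded relative precision to distinguish nearby lattice points. Once this is observed, the induction closes cleanly.
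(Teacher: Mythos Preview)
Your proposal is correct and takes essentially the same approach as the paper: both set up the recurrence $R(i) \le S_i \cdot R(i-1) + (T_i + K)$ arising from the recursive structure of \cref{alg:d} and then unwind it (you by explicit induction, the paper by simply asserting the unrolling). Your accounting of $K$ as per-element overhead rather than total overhead is a cosmetic difference absorbed into the $\poly(S,n,\log p)$ bound, and your remark on bit precision is a reasonable observation that the paper itself does not spell out.
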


\begin{proof}
  The execution of $\cstrDDec(\vecy, i)$ consists of one call to
  $\eucBchDec_i$, at most $S_i$ calls to $\cstrDDec(\cdot, i - 1)$, and
  some $K=\poly(S,n,\log p)$ additional work. So $R(i)$ satisfies the
  recurrence
  \begin{equation}
    R(i) \leq S_i \cdot R(i-1) + (T_i + K)
  \end{equation}
  with the initial value $R(0) = T_0 + K$. The bound from
  \cref{eq:alg-d-runtime-bound} follows immediately by unwinding this
  recurrence.
\end{proof}


\section{Decoding Near Minkowski's Bound}
\label{sec:near-minkowski}

In this section we recall a certain family of $n$-dimensional
Construction D lattices whose (normalized) minimum distances are
within an $O(\sqrt{\log n})$ factor of optimal, and instantiate
\cref{alg:d} to efficiently list decode those lattices to within a
$1/\sqrt{2} - \varepsilon$ factor of the minimum distance, for any
constant $\varepsilon > 0$.

Throughout this section, for simplicity we restrict our focus to codes
over characteristic-two fields. We can get similar results for larger
characteristic $p > 2$ by generalizing \cref{thm:construction-d-props}
in the natural way. However, because BCH codes for larger
characteristic have a weaker codimension bound (see
\cref{lem:bch-dimension}), the corresponding family of lattices admit
a weaker bound on their normalized minimum distances.

\begin{theorem}[{{\cite[Chapter 8, Theorem 13, rescaled]{conwaysloane99:_splag}}}]
  \label{thm:construction-d-props}
  Let
  $\F_2^n = \cal{C}_0 \supseteq \cal{C}_1 \supseteq \cdots \supseteq
  \cal{C}_\ell$ be a tower of length-$n$ binary linear codes where
  $\cal{C}_i$ has dimension~$k_{i}$ and minimum Hamming distance
  $d_i \geq 4^i$ for $i = 0, \ldots, \ell$. The Construction~D
  lattice~$\Lambda = \Lambda_{\ell}$ for the tower $\set{\cal{C}_i}$
  has Euclidean minimum
  distance\footnote{In~\cite{conwaysloane99:_splag} only a lower bound
    on the minimum distance is claimed, but it is easy to see that
    this is an equality, since~$\Lambda_{\ell}$ has $2^{\ell} \Z^{n}$
    as a sublattice.}  $\lambda_1(\Lambda) = 2^{\ell}$ and determinant
  \begin{equation}
    \label{eq:construction-d-determinant}
    \det(\Lambda) = 2^{n\ell - \sum_{i = 1}^{\ell} k_i} =
    2^{\sum_{i=1}^{\ell} (n-k_{i})}.
  \end{equation}
\end{theorem}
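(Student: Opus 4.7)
The plan is to prove both claims by induction on $i = 0, 1, \ldots, \ell$, leveraging the unique decomposition $\vecv = \tilde{\vecc} + 2\vecv_{i-1}$ (with $\vecc \in \cal{C}_i$ and $\vecv_{i-1} \in \Lambda_{i-1}$) of any $\vecv \in \Lambda_i$ guaranteed by \cref{rem:construction-d-decomp}. Two small preliminary observations will be convenient: first, $\Lambda_i \subseteq \Z^n$ for every~$i$ (an easy induction from $\Lambda_0 = \Z^n$, $\tilde{\cal{C}}_i \subseteq \Z^n$, and $2\Lambda_{i-1} \subseteq \Z^n$); second, $\tilde{\vecc} \equiv \vecc \pmod{2}$ coordinate-wise, which follows directly from the definition of the representative, regardless of the specific choice of $\bar{\vecb}_j$.

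For the minimum-distance lower bound I prove $\lambda_1(\Lambda_i) \geq 2^i$ by induction. The base case $i=0$ is immediate from $\Lambda_0 = \Z^n$. In the inductive step, take a nonzero $\vecv \in \Lambda_i$ and decompose it as above. If $\vecc = \veczero$, then $\vecv = 2\vecv_{i-1}$ with $\vecv_{i-1} \neq \veczero$, so the inductive hypothesis gives $\length{\vecv} = 2\length{\vecv_{i-1}} \geq 2^i$. If $\vecc \neq \veczero$, then $\vecc$ has Hamming weight at least $d_i \geq 4^i$, and at every coordinate~$k$ with $\vecc_k = 1$ the entry $v_k = \tilde{\vecc}_k + 2(\vecv_{i-1})_k$ is odd (using $\tilde{\vecc}_k \equiv \vecc_k \pmod{2}$), so $\abs{v_k} \geq 1$; summing over such~$k$ yields $\length{\vecv}^2 \geq d_i \geq 4^i$. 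For the matching upper bound, I would show by induction that $2^i\Z^n \subseteq \Lambda_i$ (the step uses $2^i\Z^n = 2 \cdot 2^{i-1}\Z^n \subseteq 2\Lambda_{i-1} \subseteq \Lambda_i$), and exhibit the vector $(2^\ell, 0, \ldots, 0) \in \Lambda$ of Euclidean norm exactly $2^\ell$.

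For the determinant I would compute the successive indices $[\Lambda_{i-1} : \Lambda_i]$ and multiply. Consider the sandwich $2\Lambda_{i-1} \subseteq \Lambda_i \subseteq \Lambda_{i-1}$: the outer index is $[\Lambda_{i-1} : 2\Lambda_{i-1}] = 2^n$ because $\Lambda_{i-1}$ is full rank, and I claim the inner index is $[\Lambda_i : 2\Lambda_{i-1}] = \abs{\cal{C}_i} = 2^{k_i}$. Indeed, the map $\cal{C}_i \to \Lambda_i / 2\Lambda_{i-1}$ sending $\vecc \mapsto \tilde{\vecc} + 2\Lambda_{i-1}$ is surjective by the definition of $\Lambda_i$, and is injective because $\tilde{\vecc} - \tilde{\vecc}' \in 2\Lambda_{i-1}$ gives two decompositions $\tilde{\vecc} + 2\veczero = \tilde{\vecc}' + 2\vecv_{i-1}$ of the same element of $\Lambda_i$, so the uniqueness in \cref{rem:construction-d-decomp} forces $\vecc = \vecc'$. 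Hence $[\Lambda_{i-1} : \Lambda_i] = 2^{n - k_i}$, and multiplying over $i = 1, \ldots, \ell$ (with $\det(\Lambda_0) = 1$) gives $\det(\Lambda) = 2^{\sum_{i=1}^{\ell}(n - k_i)}$, matching either form in the statement.

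The main obstacle I anticipate is the injectivity half of the index computation, which rests on the subtle uniqueness statement of \cref{rem:construction-d-decomp} and requires care that $\tilde{\vecc}$ actually lies in $\Lambda_i \subseteq \Lambda_{i-1}$ so that subtraction inside $\Lambda_{i-1}/2\Lambda_{i-1}$ is legitimate. The minimum-distance argument is the other place to be cautious, but once the parity observation $\tilde{\vecc} \equiv \vecc \pmod{2}$ is in hand, the bound reduces cleanly to the Hamming-weight hypothesis $d_i \geq 4^i$.
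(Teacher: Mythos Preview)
Your proof is correct. Note that the paper does not prove this theorem; it cites Conway--Sloane and only sketches the ideas in \cref{rem:construction-d-props}. Your minimum-distance argument is essentially a rigorous version of the intuition the paper gives there: the remark observes that ``for integer vectors, the Euclidean norm is at least the square root of the Hamming weight,'' which is exactly what your parity observation $\tilde{\vecc} \equiv \vecc \pmod{2}$ extracts (each odd coordinate contributes at least~$1$ to $\length{\vecv}^2$), and both arguments then rely on the scaling by two at each level to propagate the bound inductively.

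Your determinant computation, however, takes a genuinely different route. The paper (and \cref{rem:construction-d-basis}) computes $\det(\Lambda)$ by exhibiting an explicit basis: the vectors $2^{i_j}\overline{\vecb}_j$ can be permuted to form an upper-triangular integer matrix, so the determinant is simply the product of the diagonal scaling factors. You instead compute the successive indices $[\Lambda_{i-1}:\Lambda_i]$ via the sandwich $2\Lambda_{i-1} \subseteq \Lambda_i \subseteq \Lambda_{i-1}$ and the bijection $\cal{C}_i \to \Lambda_i/2\Lambda_{i-1}$. Your approach avoids needing the upper-triangularity hypothesis on the basis at all (it works for any choice of representatives $\overline{\vecb}_j$), while the paper's approach gives more---an explicit short basis---at the cost of that structural assumption. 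One small point you glossed over: the sandwich requires $\Lambda_i \subseteq \Lambda_{i-1}$, which follows because for $\vecc \in \cal{C}_i \subseteq \cal{C}_{i-1}$ the representative~$\tilde{\vecc}$ is defined identically in both codes (same basis vectors $\vecb_1,\ldots,\vecb_{k_i}$), so $\tilde{\cal{C}}_i \subseteq \tilde{\cal{C}}_{i-1} \subseteq \Lambda_{i-1}$.
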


\begin{remark}
  \label{rem:construction-d-props}
  The determinant of the Construction D lattice~$\Lambda$ follows
  directly from the fact that the (row) vectors $\overline{\vecb}_{i}$
  can be permuted to form an upper-triangular matrix, and scaling them
  by appropriate powers of two yields a basis of~$\Lambda$.

  As intuition (but not a proof) for the minimum distance
  of~$\Lambda$, this arises from the fact that the minimum Hamming
  distance of the codes increases by a factor of four at each
  level. This means that the minimum Euclidean norm of the vectors in
  the representative sets~$\tilde{\cal{C}}_{i}$ increases by a factor
  of two at each level, because for integer vectors, the Euclidean
  norm is at least the square root of the Hamming weight. These
  varying Euclidean minimum distances are then equalized by scaling
  the layers~$\Lambda_{i}$ by corresponding powers of two.
\end{remark}

\begin{remark}
  \label{rem:construction-d-short-basis}
  Using \cref{rem:construction-d-basis}, we can obtain a basis
  of~$\Lambda$ whose vectors have Euclidean norm at most
  $2^{\ell} \sqrt{n}$. For a tower of codes~$\cal{C}_{i}$ meeting the
  conditions from \cref{thm:construction-d-props}, the minimum
  distance of~$\Lambda$ is~$2^{\ell}$. Therefore, the lengths of the
  basis vectors are within a~$\sqrt{n}$ factor of optimal. We can use
  this basis to generate many more relatively short lattice vectors,
  by taking small integer linear combinations. This can be used for
  encoding messages as short lattice vectors, as mentioned in the
  introduction.
\end{remark}

\begin{construction}[BCH lattice family]
  \label{con:dense-lattice}
  Let~$q$ be a power of two, let $n = q - 1$, and let
  $\ell \leq \log_{4} n$ be a positive integer. For each
  $i = 0, \ldots, \ell$ let $\cal{C}_{i} = \cal{C}_{\F_{q}}[n, 4^{i}]$
  be the BCH code of length~$n$ with designed distance
  $d_{i} = 4^{i} \leq n$. Define the lattice~$\Lambda_{q,\ell}$ to be
  the Construction~D lattice for the tower
  $\F_{2}^{n} = \cal{C}_{0} \supseteq \cal{C}_{1} \supseteq \cdots
  \supseteq \cal{C}_{\ell}$.\footnote{Recall from
    \cref{sec:preliminaries} that
    $\cal{C}_{i-1} \supseteq \cal{C}_{i}$ because
    $d_{i-1} \leq d_{i}$, so the codes form a tower, as required.}
\end{construction}

This construction and the following lemma are essentially the same as
the ones that appear in
\cite{barnes83:_lattice_packings,DBLP:journals/toc/Micciancio12}. The
construction could also be performed with towers of \emph{extended}
BCH codes of length $n=q$, in which a parity-check bit is appended to
every BCH codeword. That alternative construction achieves the same
asymptotic bound on the normalized minimum distance, as shown
in~\cite{DBLP:journals/toc/Micciancio12}.

\begin{lemma}
  \label{lem:dense-lattice-hermite}
  For any $q = 2^{\kappa}$ with $n=q-1$ and
  $\ell = \log_{4} \Theta(n/\log n)$, the $n$-dimensional lattice
  $\Lambda = \Lambda_{q,\ell}$ satisfies
  $\lambda_{1}(\Lambda)/\det(\Lambda)^{1/n} = \Omega(\sqrt{n/\log
    n})$.
\end{lemma}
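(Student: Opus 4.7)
The plan is to plug the BCH codimension bound (\cref{lem:bch-dimension}) into the determinant formula from \cref{thm:construction-d-props}, and then observe that the particular choice of~$\ell$ balances the growth of $\lambda_{1}(\Lambda)=2^{\ell}$ against that of $\det(\Lambda)^{1/n}$ up to a constant factor.

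First, since the tower satisfies $d_{i}=4^{i}\geq 4^{i}$, \cref{thm:construction-d-props} gives directly $\lambda_{1}(\Lambda)=2^{\ell}$ and $\det(\Lambda)=2^{\sum_{i=1}^{\ell}(n-k_{i})}$. With $\ell=\log_{4}\Theta(n/\log n)$ we have $4^{\ell}=\Theta(n/\log n)$, hence
\[
\lambda_{1}(\Lambda)=2^{\ell}=\sqrt{4^{\ell}}=\Theta\bigl(\sqrt{n/\log n}\bigr).
\]

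Next, I would bound $\det(\Lambda)^{1/n}$ by a constant. Applying \cref{lem:bch-dimension} with $p=2$ and $q=2^{\kappa}$ (so $\kappa=\log_{2}(n+1)=\Theta(\log n)$) to each $\cal{C}_{i}=\cal{C}_{\F_{q}}[n,4^{i}]$ yields
\[
n-k_{i}\ \leq\ \Bigl\lceil\tfrac{1}{2}(4^{i}-1)\Bigr\rceil\cdot \kappa\ \leq\ \tfrac{1}{2}\cdot 4^{i}\cdot \kappa.
\]
Summing the geometric series,
\[
\sum_{i=1}^{\ell}(n-k_{i})\ \leq\ \tfrac{\kappa}{2}\sum_{i=1}^{\ell}4^{i}\ \leq\ \tfrac{2\kappa}{3}\cdot 4^{\ell}.
\]
Dividing by~$n$ and using $4^{\ell}/n=\Theta(1/\log n)$ and $\kappa=\Theta(\log n)$ gives
\[
\frac{1}{n}\sum_{i=1}^{\ell}(n-k_{i})\ =\ O\!\left(\frac{\kappa\cdot 4^{\ell}}{n}\right)\ =\ O(1),
\]
so $\det(\Lambda)^{1/n}=2^{O(1)}=O(1)$.

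Combining the two bounds, $\lambda_{1}(\Lambda)/\det(\Lambda)^{1/n}=\Omega(\sqrt{n/\log n})$, as claimed. The only mild wrinkle is verifying that the ceiling in the BCH codimension bound does not inflate the sum (it contributes at most an additive~$\ell=O(\log n)$, which is absorbed into the $\kappa\cdot 4^{\ell}$ term); everything else is arithmetic on a geometric series, so I expect no real obstacle.
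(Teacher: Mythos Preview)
Your proof is correct and follows essentially the same route as the paper: apply \cref{thm:construction-d-props} for $\lambda_{1}(\Lambda)=2^{\ell}$ and the determinant formula, plug in the BCH codimension bound from \cref{lem:bch-dimension}, sum the geometric series to get $\sum_{i}(n-k_{i})\leq \tfrac{2\kappa}{3}\cdot 4^{\ell}$, and use $4^{\ell}=\Theta(n/\log n)$ to conclude that $\det(\Lambda)^{1/n}=O(1)$. Your note about the ceiling is harmless but unnecessary: for $i\geq 1$, $4^{i}-1$ is odd so $\lceil (4^{i}-1)/2\rceil = 4^{i}/2$ exactly.
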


\begin{proof}
  Let $h := 2^{\ell}$.  By \cref{thm:construction-d-props}, $\Lambda$
  has minimum distance $\lambda_{1}(\Lambda) \geq h$ and determinant
  $\det(\Lambda) = 2^{\sum_{i=1}^{\ell} (n - k_{i})}$. Using the
  codimension bound from \cref{lem:bch-dimension}, for all $i \geq 1$
  we have $n - k_{i} \leq \kappa (4^{i}/2)$. Substituting this bound
  and expanding the resulting geometric series, we get
  \begin{equation*}
    \sum_{i=1}^{\ell} (n - k_{i}) \leq \sum_{i=1}^{\ell} \kappa (4^{i}/2)
    = \kappa (4^{\ell+1}/6) = 2\kappa h^{2}/3.
  \end{equation*}
  Thus we have
  $\det(\Lambda) \leq 2^{2\kappa h^{2}/3} = q^{2h^{2}/3}$. Finally,
  using $h = 2^{\ell} = \Theta(\sqrt{n/\log n})$, this yields
  \begin{equation*}
    \frac{\lambda_{1}(\Lambda)}{\det(\Lambda)^{1/n}} \geq \frac{h}{q^{2h^{2}/(3n)}}
    = \Omega(h) = \Omega(\sqrt{n/\log n}).
  \end{equation*}
\end{proof}

\begin{theorem}
  \label{thm:dense-lattice-decoder}
  Let~$q$ be a power of two with $n=q-1$, and let
  $\ell \leq \log_{4} (n-1)$. \cref{alg:d}, using
  \cref{alg:euclidean-bch-decoder} as the decoder~$\cal{D}_{i}$ for
  code~$\cal{C}_{i}$, is a list-decoding algorithm for the
  lattice~$\Lambda = \Lambda_{q,\ell}$ defined in
  \cref{con:dense-lattice}. Specifically, on input a received word
  $\vecy \in \R^{n}$ and $\varepsilon > 0$, the algorithm outputs a
  list of exactly those $\vecv \in \Lambda$ such that
  $\length{\vecy - \vecv} \leq \lambda_{1}(\Lambda) \cdot \sqrt{(1 -
    \varepsilon)/2}$, in time $\poly(n,1/\epsilon)^{\ell}$. Moreover,
  for any $\ell \leq \log_{4}((1-\Omega(1))n)$ and constant
  $\varepsilon > 0$, the algorithm runs in time $\poly(n)$.
\end{theorem}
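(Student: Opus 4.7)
The plan is to instantiate \cref{alg:d} using \cref{alg:euclidean-bch-decoder} (with parameter~$\varepsilon$) as the decoder~$\cal{D}_i$ for each code~$\cal{C}_i$, with decoding radii $e_i = 2^i e_0$ where $e_0 := \sqrt{(1-\varepsilon)/2}$. Since $p = 2$ and $e_0 < 1/\sqrt{2} < 1 = p/2$, the precondition of \cref{thm:alg-d-correctness} is satisfied. Using $\lambda_{1}(\Lambda) = 2^{\ell}$ from \cref{thm:construction-d-props}, the final decoding radius is $e_\ell = 2^\ell e_0 = \lambda_{1}(\Lambda)\sqrt{(1-\varepsilon)/2}$, exactly the target distance. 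For correctness, I will invoke \cref{thm:euclidean-bch-correctness} to see that each $\cal{D}_i$ decodes $\cal{C}_i$ to squared Euclidean distance $(1-\varepsilon)d_i/2 = 4^i(1-\varepsilon)/2 = e_i^2$, and then \cref{thm:alg-d-correctness} immediately yields the claimed list-decoding guarantee for~$\Lambda$.

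For the general runtime claim, I will apply \cref{thm:alg-d-runtime}, which upper-bounds the running time by $(\ell+1)(T+K)\prod_{i=1}^\ell S_i$. Since $q = n+1$, \cref{thm:euclidean-bch-correctness} gives $T_i = \poly(n, \log q, 1/\varepsilon) = \poly(n, 1/\varepsilon)$, and \cref{eq:list-size-bound} at $R^*_i = 1 - 4^i/n \in [1/n, 1-1/n]$ can be routinely checked to satisfy $S_i \leq \poly(n, 1/\varepsilon)$ (a Taylor expansion shows that the denominator is bounded below by $\Omega(\varepsilon/n)$, while the numerator is $O(n)$). This directly gives total runtime $\poly(n, 1/\varepsilon)^\ell$, establishing the first runtime claim.

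The hard part will be the sharpened polynomial-time bound for constant $\varepsilon$ and $\ell \leq \log_4((1-\Omega(1))n)$, since the naive estimate $\prod S_i \leq \poly(n)^\ell$ is only quasi-polynomial in~$n$. The fix is a direct parallelogram-law bound on the \emph{filtered} output list of \cref{alg:euclidean-bch-decoder}, which discards any vector not within squared distance $(1-\varepsilon)d_i/2$ of~$\vecy$. Suppose $\vecv_1, \ldots, \vecv_m \in \cal{C}_i$ all satisfy $\length{\vecy - \vecv_a}^2 \leq (1-\varepsilon)d_i/2$; since $\cal{C}_i$ has minimum Hamming distance at least $d_i$ and its codewords lie in $\set{0,1}^n$, we have $\length{\vecv_a - \vecv_b}^2 \geq d_i$ for $a \neq b$. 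Setting $\vecw_a := \vecv_a - \vecy$ and expanding $\length{\vecw_a - \vecw_b}^2 = \length{\vecw_a}^2 + \length{\vecw_b}^2 - 2\inner{\vecw_a,\vecw_b}$, the two bounds force $\inner{\vecw_a, \vecw_b} \leq -\varepsilon d_i/2$. Then $0 \leq \length{\sum_a \vecw_a}^2$ expands to $m(1-\varepsilon)d_i/2 - m(m-1)\varepsilon d_i/2 \geq 0$, i.e., $m \leq 1/\varepsilon$. Plugging $S_i \leq 1/\varepsilon$ into \cref{thm:alg-d-runtime} gives $\prod_{i=1}^\ell S_i \leq (1/\varepsilon)^{\log_4 n} = n^{O(\log(1/\varepsilon))} = n^{O(1)}$ for constant~$\varepsilon$, so together with $T+K = \poly(n)$ the total runtime is $\poly(n)$, as claimed.
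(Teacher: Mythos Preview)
Your correctness argument and the general $\poly(n,1/\varepsilon)^{\ell}$ runtime bound follow the paper's proof exactly: instantiate each $\cal{D}_i$ with \cref{alg:euclidean-bch-decoder}, check $e_0 = \sqrt{(1-\varepsilon)/2} < 1 = p/2$, read off the radius $e_i = 2^i e_0$ from \cref{thm:euclidean-bch-correctness}, and invoke \cref{thm:alg-d-correctness,thm:alg-d-runtime,thm:construction-d-props}.

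For the sharpened $\poly(n)$ bound you take a genuinely different route from the paper. The paper bounds each branching factor~$S_j$ by the Koetter--Vardy list-size parameter of \cref{eq:list-size-bound} and argues that the hypothesis $4^{\ell} \leq (1-\Omega(1))n$ forces $1/R^{*} = O(1)$ and hence $S = O(1)$ for constant~$\varepsilon$. You instead ignore \cref{eq:list-size-bound} entirely and bound the \emph{filtered} output of \cref{alg:euclidean-bch-decoder} by a direct Rankin/parallelogram argument, obtaining $S_j \leq 1/\varepsilon$ uniformly in~$j$. Your approach is more elementary and depends only on the decoding radius $(1-\varepsilon)d_i/2$ relative to the code's minimum distance, not on the shape of \cref{eq:list-size-bound}; in particular it is insensitive to the fact that for small~$j$ one has $R^{*}_j = 1 - 4^{j}/n$ very close to~$1$, a regime in which the expression in \cref{eq:list-size-bound} is \emph{not} obviously $O(1)$ (indeed its denominator is $\Theta(\varepsilon\, 4^{j}/n)$). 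So your argument is arguably more robust here, and as a bonus it does not actually use the hypothesis $\ell \leq \log_{4}((1-\Omega(1))n)$.

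One small technicality: your parallelogram computation uses inner products and hence implicitly lives in~$\R^{n}$, but the distances output by \cref{alg:euclidean-bch-decoder} are in~$\R_{p}^{n}$. To make the step airtight, lift each $\vecv_a - \vecy$ to its minimal-norm representative $\vecw_a \in [-1,1)^{n} \subset \R^{n}$; then $\length{\vecw_a}^{2} \leq (1-\varepsilon)d_i/2$ by definition of the $\R_p^n$ norm, and $\length{\vecw_a - \vecw_b}_{\R^{n}} \geq \length{\vecv_a - \vecv_b}_{\R_{p}^{n}} \geq \sqrt{d_i}$ since the former is one particular representative of the latter coset. The rest of your inequality chain then goes through verbatim in~$\R^{n}$.
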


\begin{proof}
  Because $d_{i} = 4^{i}$, by \cref{thm:euclidean-bch-correctness},
  $\cal{D}_{i}$ decodes~$\cal{C}_{i}$ to Euclidean distance
  $2^{i}\sqrt{(1 - \varepsilon)/2}$. So, the decoders~$\cal{D}_{i}$
  satisfy the requirements of \cref{alg:d}, with
  $e_{0} = \sqrt{(1-\varepsilon)/2} < 1$. Therefore, by
  \cref{thm:alg-d-correctness} and \cref{thm:construction-d-props},
  \cref{alg:d} list decodes~$\Lambda$ to Euclidean distance
  $2^{\ell} e_{0} = \lambda_{1}(\Lambda) \cdot
  \sqrt{(1-\varepsilon)/2}$.

  By \cref{thm:euclidean-bch-correctness} and
  \cref{thm:alg-d-runtime}, the running time is
  $\poly(n, 1/\varepsilon) \cdot S^{\ell}$, where
  $S=\poly(n,1/\varepsilon)$ is as in \cref{eq:list-size-bound}, for
  $1/R^{*} \leq n/(n-d_{\ell}) \leq n$. In particular, when
  $d_{\ell} = 4^{\ell} \leq (1-\Omega(1))n$, we have $1/R^{*} = O(1)$,
  and when $\varepsilon > 0$ is a positive constant, we have
  $S = O(1)$ by \cref{eq:list-size-bound}. Therefore, because
  $\ell=O(\log n)$, the running time
  $\poly(n,1/\varepsilon) \cdot S^{\ell} = \poly(n)$, as claimed.
\end{proof}

Finally, taking any $\ell = \log_{4} \Theta(n/\log n)$ and combining
\cref{lem:dense-lattice-hermite} with \cref{thm:dense-lattice-decoder}
yields our main result, \cref{thm:main}.


\bibliography{common/lattices.bib,common/codes.bib,common/crypto.bib}
\bibliographystyle{common/alphaabbrvprelim}

\end{document}